\theoremstyle{plain}
\newtheorem{corollary}{Corollary}
\newtheorem{lemma}{Lemma}
\newtheorem{proposition}{Proposition}
\newtheorem{theorem}{Theorem}
\newtheorem*{claim*}{Claim}
\theoremstyle{definition}
\newtheorem{definition}{Definition}
\newtheorem{example}{Example}
\newtheorem{question}{Question}
\newcommand{\N}{\mathbb{N}}
\newcommand{\cst}{\mathrm{cst}}
\newcommand{\BF}[1]{{\bf\boldmath{#1}\unboldmath}}
\newcommand{\Fix}{\mathrm{Fix}}
\newcommand{\G}{\mathrm{G}}
\newcommand{\dH}{d_\mathrm{H}}
\newcommand{\dm}{d_\mathrm{m}}
\newcommand{\dM}{d_\mathrm{M}}
\newcommand{\AH}{A_\mathrm{H}}
\newcommand{\Am}{A_\mathrm{m}}
\newcommand{\AM}{A_\mathrm{M}}
\title{Fixed points of Boolean networks, guessing graphs, and coding theory}
\author{Maximilien Gadouleau\footnote{School of Engineering and Computing Sciences, Durham University, UK. \texttt{m.r.gadouleau@durham.ac.uk}}
\footnote{Corresponding author.} 
\and Adrien Richard\footnote{Laboratoire I3S, CNRS \& Universit\'e de Nice-Sophia Antipolis, France. \texttt{richard@unice.fr}} 
\and S\o ren Riis\footnote{School of Electronic Engineering and Computer Science, Queen Mary, University of London, UK. \texttt{s.riis@qmul.ac.uk}}}
\date{\today\footnote{This work is partially supported by CNRS and Royal Society through the International Exchanges Scheme grant {\em Boolean networks, network coding and memoryless computation.}}}
\begin{document}

\maketitle

\begin{abstract}
In this paper, we are interested in the number of fixed points of functions $f:A^n\to A^n$ over a finite alphabet $A$ defined on a given signed digraph $D$. We first use techniques from network coding to derive some lower bounds on the number of fixed points that only depends on $D$. We then discover relationships between the number of fixed points of $f$ and problems in coding theory, especially the design of codes for the asymmetric channel.  Using these relationships, we derive upper and lower bounds on the number of fixed points, which significantly improve those given in the literature. We also unveil some interesting behaviour of the number of fixed points of functions with a given signed digraph when the alphabet varies. We finally prove that signed digraphs with more (disjoint) positive cycles actually do not necessarily have functions with more fixed points.
\end{abstract}

\section{Introduction} \label{sec:intro}

Boolean networks have been used to represent a network of interacting agents as follows. A network of $n$ automata has a state $x= (x_0,\dots, x_{n-1}) \in \{0,1\}^n$, represented by a Boolean variable $x_i$ on each automaton $i$, which evolves according to a deterministic function $f = (f_0,\dots,f_{n-1}) : \{0,1\}^n \to \{0,1\}^n$, where $f_i : \{0,1\}^n \to \{0,1\}$ represents the update of the local state $x_i$. Boolean networks have been used to model gene networks \cite{Kau69, Tho73, TK01a, Jong02}, neural networks \cite{MP43, Hop82, Gol85}, social interactions \cite{PS83, GT83} and more (see \cite{TD90, Goles90}). Their natural generalisation where each variable $x_i$ can take more than two values in some finite alphabet $A$ has been investigated since this can be a more accurate representation of the phenomenon we are modelling \cite{TK01a,KS08}.

The structure of a network $f: A^n \to A^n$ can be represented via its interaction graph $G(f)$, which indicates which update functions depend on which variables. More formally, $G(f)$ has $\{0,\dots,n-1\}$ as vertex set and there is an arc from $j$ to $i$ if $f_i(x)$ depends essentially on $x_j$. The arcs of the interaction graph can also be signed, where the sign of $(j,i)$ indicates whether the local function $f_i(x)$ is an increasing (positive sign), decreasing (negative sign), or non-monotone (zero sign) function of $x_j$. This is commonly the case when studying gene networks, where a gene can typically either activate (positive sign) or inhibit (negative sign)  another gene. In this biological context, the interaction graph is known--or at least well approximated--, while the actual update functions are not. The main problem of research on (non-necessarily Boolean) networks is then to predict their dynamics according to their interaction graphs. 

Among the many dynamical properties that can be studied, fixed points are crucial because they represent stable states; for instance, in the context of gene networks, they correspond to stable patterns of gene expression at the basis of particular biological processes. As such, they are arguably the property which has been the most thoroughly studied. The study of the number of fixed points and its maximisation in particular is the subject of a stream of work, e.g. in \cite{R86,Aracena2004,RRT08,Ara08,Richard09,ARS14}. A lot of literature is devoted to determining when a Boolean network admits multiple fixed points (see \cite{Ric13} for a survey). 

For the maximum number of fixed points with a  given signed interaction graph, however, a wide gap remains between the best lower bounds and upper bounds known so far. The first result in this area, proposed by Thomas \cite{Tho80}, is that networks whose interaction graphs do not have a positive cycle (i.e. a cycle with an even number of negative arcs) have at most one fixed point. This was then generalised into an upper bound on the number of fixed points of Boolean networks: a network has at most $2^{k^+}$ fixed points, where $k^+$ is the minimum size of a positive feedback vertex set of its interaction graph \cite{ADG04,Ara08}. This upper bound was then refined via the use of local graphs \cite{SD05, RRT08,Richard09,Ric13}. On the other hand, a positive cycle admits a Boolean network with $2$ fixed points. A network with a large number of fixed points for a general signed digraph is then obtained by packing positive cycles; if $D$ has $c^+$ disjoint positive cycles, then the network has $2^{c^+}$ fixed points. This result is folklore and is the best known by the authors. All these results tend to suggest that positive cycles in the interaction graph produce a high number of fixed points.

A completely different approach has been developed for unsigned digraphs in the context of network coding \cite{Rii06, Rii07, Rii07a}. Network coding is a technique to transmit information through networks, which can significantly improve upon routing in theory \cite{ACLY00, YLCZ06}. At each intermediate node $v$, the received messages $x_{u_1}, \ldots, x_{u_k}$ are combined, and the combined message $f_v(x_{u_1},\ldots,x_{u_k})$ is then forwarded towards its destination. The main problem is to determine which functions can transmit the most information. In particular, the network coding solvability problem tries to determine whether a certain network situation, with a given set of sources, destinations, and messages, is {\em solvable}, i.e. whether all messages can be transmitted to their destinations. As shown in \cite{Rii07, Rii07a}, the solvability problem can be recast in terms of fixed points of (non-necessarily Boolean) networks. The so-called guessing number \cite{Rii07} of a digraph $D$ is the logarithm of the maximum number of fixed points over all networks $f$ whose interaction graph is a subgraph of $D$: $G(f) \subseteq D$. Then, a network coding instance is solvable if some digraph $D$ related to the instance admits a network with guessing number equal to the size of the minimum feedback vertex set. The guessing number approach is further developed in \cite{GR11}, where the so-called guessing graph is introduced. This technique then completely omits the local update functions and instead turns the problem of maximising fixed points into a purely coding theoretic problem. Based on this approach, numerous upper and lower bounds on the guessing number of unsigned digraphs have been derived (see \cite{GR11}).

In  this paper, we generalise and adapt the techniques developed from network coding and coding theory to tackle the problem of maximising fixed points for signed digraphs. Firstly, we generalise the concept of guessing graph to all signed digraphs in Theorem \ref{th:g=alpha}. This approach is valid for all alphabet sizes, and immediately yields two new lower bounds on the guessing number of signed digraphs (in Theorems \ref{th:bound_g1} and \ref{th:bound_g2}, respectively). Secondly, we discover several relationships between fixed points of networks defined on signed digraphs and codes correcting asymmetric errors \cite{Klo95}. Based on these relationships, we then obtain much stronger (upper and lower) bounds on the number of fixed points via bounds on codes for asymmetric errors in Theorems \ref{th:g<A} and \ref{th:g>A}. These relationships, and the fact that they are so fruitful, are slightly counter-intuitive because the vast majority of error-correcting codes are linear, and hence involve linear functions over finite fields, which are not monotonous and thus cannot be used for signed digraphs. More strikingly, we obtain our tightest bounds for digraphs in which all the arcs are signed positively, and hence where all the local update functions are monotonically increasing (see Theorem \ref{th:g<A} for instance). This illustrates the versatility of the whole guessing number approach.

Our results also illustrate some different behaviour which occurs only for signed digraphs. Indeed, we discover a digraph for which the guessing number over any finite alphabet of size $s \ge 3$ is higher than the limit of the guessing number when $s$ tends to infinity. This is unlike the unsigned case where the limit is always the supremum over all finite alphabets. Finally, by comparing the numbers of fixed points for the negative clique and the positive clique, we then show that positive cycles are not necessarily useful in order to create many fixed points. Indeed, we exhibit two signed digraphs $D_1$ and $D_2$, where $D_1$ has more positive cycles and more disjoint positive cycles than $D_2$, and yet $D_2$ has a higher guessing number. This goes against the common view described above, and is actually akin to a result in \cite{ARS14} on the number of fixed points of conjunctive networks, which is maximised for a disjoint union of negative triangles (see Section \ref{sec:positive} for a more elaborate discussion).



The rest of this paper is organised as follows. Section \ref{sec:guessing} first reviews some concepts on signed digraphs, and then introduces their guessing numbers and their guessing graphs. Based on the guessing graph, it then derives some lower bounds on the guessing number. Section \ref{sec:ECC_signed} determines different relationships between codes for the asymmetric channel and sets of fixed points and obtains more bounds on the guessing number. Section \ref{sec:positive} then investigates the guessing number of signed cliques. Finally, Section \ref{sec:comparison_bounds} compares the different bounds we have derived in the earlier sections.

\section{Guessing graph of a signed digraph} \label{sec:guessing}

\subsection{Signed digraphs and their guessing numbers} \label{sec:signed_digraphs}

A {\bf signed digraph} is a labelled digraph $D=(V,E,\lambda)$ where $\lambda:E\to\{-1,0,1\}$; typically $V = \{0,1,\ldots,n-1\}$. We shall equivalently represent a sign as $\alpha \in \{- , 0, +\}$.  We say that a signed digraph $D$ is {\bf unate} if $0\not\in\lambda(E)$; if all signs are equal, we shall make that explicit in our notation: we say that $D^0$ is {\bf unsigned} if $\lambda=\cst=0$, $D^+$ is {\bf positive} if $\lambda = \cst = 1$, and $D^-$ is {\bf negative} if $\lambda = \cst = -1$. We denote by $d$, $\delta$ and $\Delta$ the average, minimal and maximal in-degree of $D$, respectively. For all $i\in V$ and $\alpha \in \{-,0,+\}$, we denote by $d^\alpha_i$ the number of in-neighbours $j$ of $i$ such that $(j,i)$ has sign $\alpha$; $d^\alpha$, $\delta^\alpha$ and $\Delta^\alpha$ are then defined similarly. We set $\delta^\pm := \min_i \{d^+_i + d^-_i\}$.

The subgraphs of signed digraphs are defined as such. Let $D = (V,E,\lambda)$ and $D' = (V', E', \lambda')$ be two signed digraphs and let $|D| = (V,E)$ and $|D'| = (V',E')$ be their corresponding unsigned versions. We say that $D$  is a subgraph of $D'$, and denote it as $D \subseteq D'$, if $|D|$ is a subgraph of $|D'|$ and if the following holds: for every arc $e \in E$, $\lambda'(e) = -1$ implies $\lambda(e) = -1$ and $\lambda'(e) = 1$ implies $\lambda(e) = 1$. We remark that there is no constraint on the sign of $e$ in $D$ if it has zero sign in $D'$.

The {\bf sign of a cycle} of $D$ is the product of the sign of its arcs; a {\bf non-negative (resp. non-positive)} cycle is a cycle of sign $\ge 0$ (resp. $\le 0$). The {\bf non-negative girth} of $D$ is the minimum length of a non-negative cycle in $D$ and is denoted as $\gamma^+$. The subgraph of $D
$ induced by a set of vertices $U$ is denoted $D[U]$. A set of vertices $U \subseteq V$ such that $D[V \setminus U]$ does not contain any non-negative cycle is referred to as a {\bf non-negative feedback vertex set}. The minimum cardinality of a non-negative feedback vertex set is denoted as $k^+$. 

We denote by $N(i)$ the in-neighborhood of a vertex $i$ in $D$. For any vertex $i$ and $\alpha\in\{-,0,+\}$, we denote by $N^\alpha(i)$ the set of $j\in N(i)$ such that $\lambda(j,i)=\alpha$. If $h$ is a map from $V$ to the parts of $V$ then $h(U)=\cup_{i\in U} h(i)$ for all $U\subseteq V$. For example, $D$ is unate if $N^0(V)=\emptyset$ and unsigned if $N^+(V) = N^-(V) = \emptyset$. An arc of the form $(u,u)$ is a {\bf loop} on the vertex $u$.  

For every positive integer $a$, $[a]$ denotes the interval $\{0,1,\dots,a-1\}$. Let $s\ge 2$ and $n\ge 1$ be positive integers. For every $x\in [s]^n$ we write $x=(x_0,x_1,\dots,x_{n-1})$. The restriction of $x$ to a set of indices $I\subseteq [n]$ is denoted as $x_I$. 

Let $f:[s]^n\to [s]^n$. The set of fixed points of $f$ is denoted as $\Fix(f)$. We denote by $G(f)$ the \BF{signed interaction graph of $f$} as follows: the vertex set is $[n]$; for all $i,j\in [n]$, there exists an arc from $j$ to $i$ if $f_i$ depends essentially on $x_j$; and the sign $\lambda(j,i)$ of every arc $(j,i)$ is defined by 
\[
	\lambda(j,i)=
	\begin{cases}
	1&\text{if }f_i(x)\le f_i(x_0,\dots,x_j+1,\dots,x_{n-1})\text{ for all }x\in [s]^n\text{ with }x_j<s-1\\
	-1&\text{if }f_i(x)\ge f_i(x_0,\dots,x_j+1,\dots,x_{n-1})\text{ for all }x\in [s]^n\text{ with }x_j<s-1\\
	0&\text{otherwise}.
	\end{cases}
\]

Let $D$ be a signed digraph on $[n]$. We denote by $F(D,s)$ the set of networks on $D$, that is, the set of $f:[s]^n\to[s]^n$ such that $G(f)\subseteq D$. More explicitly, $f \in F(D,s)$ if and only if the following three constraints hold:
\begin{itemize}
	\item[-] $f_i$ depends on $x_j$ only if $(j,i)$ is an arc in $D$;
	\item[-] if $\lambda(j,i) = 1$, then $f_i$ is a non-decreasing function of $x_j$;
	\item[-] if $\lambda(j,i) = -1$, then $f_i$ is a non-increasing function of $x_j$.
\end{itemize}
We remark that $\lambda(j,i) = 0$ does not put any restriction as to how $f_i$ depends on $x_j$. For all $i\in [n]$, we define the order relation $\le_i$ on $\N^n$ by: 
\[
	x\le_i y\quad \iff\quad  x_{N^0(i)}=y_{N^0(i)}\text{ and }x_{N^+(i)}\le y_{N^+(i)}\text{ and }x_{N^-(i)}\ge y_{N^-(i)}. 
\]
This relation can then be used to characterise the set of networks on $D$ (the proof is a simple exercise).
\begin{lemma}\label{pro:monotonous}
For all $f:[s]^n\to [s]^n$ we have
\[
	f\in F(D,s)\qquad  \iff\qquad  \forall i\in [n],~\forall x,y\in [s]^n,\quad x\le_i y\Rightarrow f_i(x)\le f_i(y).
\]
\end{lemma}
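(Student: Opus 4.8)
The plan is to verify the two implications of Lemma~\ref{pro:monotonous} directly from the definitions: the forward direction by interpolating between $x$ and $y$ one coordinate at a time, and the backward direction by specialising the monotonicity hypothesis to carefully chosen pairs of points.

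For the forward direction, suppose $f\in F(D,s)$ and fix $i\in[n]$ and $x,y\in[s]^n$ with $x\le_i y$. Since every arc of $G(f)$ is an arc of $D$, the function $f_i$ does not depend essentially on $x_j$ whenever $j\notin N(i)$, so $f_i(z)$ depends only on $z_{N(i)}$; it therefore suffices to move from $x$ to $y$ along the coordinates in $N(i)=N^0(i)\cup N^+(i)\cup N^-(i)$. On $N^0(i)$ we have $x=y$, so there is nothing to do. On $N^+(i)$ we have $x\le y$, and I would increment each coordinate $x_j$ ($j\in N^+(i)$) by one repeatedly until it reaches $y_j$; each elementary step keeps the current $j$-th coordinate below $s-1$ and, since $\lambda(j,i)=1$ forces $f_i$ to be non-decreasing in $x_j$, does not decrease $f_i$. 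Symmetrically, on $N^-(i)$ we have $x\ge y$, and decrementing each $x_j$ ($j\in N^-(i)$) by one until it reaches $y_j$ does not decrease $f_i$ because $\lambda(j,i)=-1$ forces $f_i$ to be non-increasing in $x_j$. Concatenating these elementary steps gives $f_i(x)\le f_i(y)$.

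For the backward direction, assume the monotonicity condition holds and check the three constraints defining $F(D,s)$ in turn. If $(j,i)$ is not an arc of $D$, then $j\notin N(i)$, so any two points $x,y$ that differ only in coordinate $j$ agree on all of $N(i)$ and hence satisfy both $x\le_i y$ and $y\le_i x$; the hypothesis then yields $f_i(x)\le f_i(y)$ and $f_i(y)\le f_i(x)$, i.e. $f_i(x)=f_i(y)$, so $f_i$ does not depend essentially on $x_j$. If $\lambda(j,i)=1$, then $j\in N^+(i)$, and for any $x$ with $x_j<s-1$ the point $y=(x_0,\dots,x_j+1,\dots,x_{n-1})$ satisfies $x\le_i y$, whence $f_i(x)\le f_i(y)$ and $f_i$ is non-decreasing in $x_j$; the case $\lambda(j,i)=-1$ is identical with the roles of $x$ and $y$ exchanged (now $y\le_i x$). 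No step is genuinely difficult — the only thing to watch is that, on $N^-(i)$, being smaller in $\le_i$ means being coordinate-wise larger — which is why the statement is flagged as a simple exercise.
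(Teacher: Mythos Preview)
Your proof is correct. The paper does not give a proof of this lemma at all --- it simply states that ``the proof is a simple exercise'' --- so there is no approach to compare against; your argument (interpolating one coordinate at a time for the forward direction, and specialising to single-coordinate changes for the converse) is exactly the natural verification one would expect.
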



The $s$-ary {\bf guessing number} $g(D,s)$ of a signed digraph $D$ is defined as the logarithm of the maximum number of fixed points in a network on $D$:
\[
	g(D,s)=\max_{f\in F(D,s)} \log_s |\Fix(f)|.
\]
The bounds on the guessing number known so far are
\begin{equation} \label{eq:old_bounds}
	c^+ \le g(D,s) \le k^+,
\end{equation}
where $k^+$ is the size of a minimum non-negative feedback vertex set in $D$ and $c^+$ is the maximum number of disjoint non-negative cycles in $D$ (we remark that $c^+ \le n/\gamma^+$); the upper bound is given in \cite{Richard09} and the lower bound is an easy exercise.

We finally remark that the case $s=2$ is referred to Boolean in the literature on Boolean networks, while  it is referred to as binary in coding theory and network coding. We shall use both terms interchangeably.

\subsection{Definition and general results} \label{sec:def_guessing}

The guessing graph was first proposed for unsigned digraphs in \cite{GR11} and then extended to closure operators (for the so-called closure solvability problem) in \cite{Gad13}. We now adapt the definition to the case of signed digraphs. Some of our results can be viewed as analogues of those in \cite{GR11}, while others are entirely novel.

\begin{definition}[Guessing graph of a signed digraph] \label{def:guessing_graph}
For any signed digraph $D$ on $[n]$ and any integer $s\ge 2$, the $s$-guessing graph of $D$, denoted as $\G(D,s)$, is the simple graph on $[s]^n$ with an edge $xy$ if there does not exist $f\in F(D,s)$ such that $x,y\in\Fix(f)$. 
\end{definition}

\begin{proposition}\label{pro:edges}
The set of edges of $\G(D,s)$ is 
\[
	E(\G(D,s))=\bigcup_{i\in[n]} E_i(D,s)
\]
with
\[
	E_i(D,s)=\{ xy :x,y\in[s]^n, \text{ either } x\le_i y\text{ and }~x_i>y_i\textrm{, or }y\le_i x\text{ and }~y_i>x_i\}.
\]
\end{proposition}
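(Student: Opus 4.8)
The plan is to establish the equality $E(\G(D,s)) = \bigcup_{i\in[n]} E_i(D,s)$ by proving two inclusions, working directly from Definition \ref{def:guessing_graph} and the characterisation of $F(D,s)$ given in Lemma \ref{pro:monotonous}. Recall that $xy \in E(\G(D,s))$ means precisely that there is \emph{no} network $f \in F(D,s)$ with $x,y \in \Fix(f)$; equivalently, $xy \notin E(\G(D,s))$ iff some $f \in F(D,s)$ fixes both $x$ and $y$.

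For the inclusion $\bigcup_i E_i(D,s) \subseteq E(\G(D,s))$: suppose $xy \in E_i(D,s)$, say $x \le_i y$ and $x_i > y_i$ (the other case is symmetric). I want to show no $f \in F(D,s)$ fixes both. If such an $f$ existed, then $x \le_i y$ would force $f_i(x) \le f_i(y)$ by Lemma \ref{pro:monotonous}; but $f_i(x) = x_i > y_i = f_i(y)$, a contradiction. So $xy$ is indeed an edge of the guessing graph.

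For the reverse inclusion $E(\G(D,s)) \subseteq \bigcup_i E_i(D,s)$, I argue the contrapositive: if $xy \notin \bigcup_i E_i(D,s)$ for distinct $x,y$, I must build a single $f \in F(D,s)$ fixing both $x$ and $y$. The failure of membership in every $E_i$ means: for each $i$, it is \emph{not} the case that ($x \le_i y$ with $x_i > y_i$) and \emph{not} the case that ($y \le_i x$ with $y_i > x_i$). The natural candidate is the coordinatewise construction $f_i(z) := x_i$ if $z \le_i y$ forces nothing problematic... more carefully, define $f_i$ by a case split designed so that $f_i(x) = x_i$, $f_i(y) = y_i$, and $f_i$ is $\le_i$-monotone. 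The clean way: for each $i$, consider whether $x$ and $y$ are $\le_i$-comparable. If they are incomparable under $\le_i$, set $f_i \equiv x_i$ if $x_i = y_i$, and otherwise define $f_i(z)$ to depend on which of $x,y$ the point $z$ is $\le_i$-above or below, exploiting incomparability to keep monotonicity. If $x \le_i y$, then by hypothesis $x_i \le y_i$; define $f_i(z) = x_i$ for $z$ not $\le_i$-above some threshold and $y_i$ above it — e.g. $f_i(z) = y_i$ if $z \ge_i$ (the appropriate meet), else $x_i$ — and check $\le_i$-monotonicity and that $f_i(x)=x_i$, $f_i(y)=y_i$. One must also ensure $f_i$ genuinely only depends on coordinates $j$ with $(j,i) \in E$, which holds automatically since $\le_i$ only involves $N(i)$.

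The main obstacle is the reverse inclusion: one must exhibit an explicit $f$, and the case analysis on the $\le_i$-relationship between $x$ and $y$ (comparable in one direction, the other, or incomparable) must be handled so that each $f_i$ is simultaneously $\le_i$-monotone, fixes both $x_i$ and $y_i$, and uses only allowed variables. I expect the cleanest formulation defines, for each $i$, a monotone $f_i$ via the indicator of the up-set generated by $x$ or by $y$ with respect to $\le_i$; verifying $x\le_i y \Rightarrow f_i(x)\le f_i(y)$ then reduces to the hypothesis that $xy \notin E_i(D,s)$, which is exactly the statement that the order violation $x_i > y_i$ with $x \le_i y$ does not occur. Once each $f_i$ is built, Lemma \ref{pro:monotonous} certifies $f \in F(D,s)$, and $f$ fixes both $x$ and $y$ by construction, completing the contrapositive.
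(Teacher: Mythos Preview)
Your approach is correct and essentially the same as the paper's: the forward inclusion is identical, and for the reverse you both build each $f_i$ as a two-valued $\le_i$-monotone step function supported on the down-set of $x$ or of $y$. The paper's execution is cleaner than your comparability case analysis, however: it chooses the threshold point according to the sign of $x_i - y_i$ (for $i$ with $x_i < y_i$ set $f_i(z) = x_i$ if $z \le_i x$ and $y_i$ otherwise; for $i$ with $x_i \ge y_i$ set $f_i(z) = y_i$ if $z \le_i y$ and $x_i$ otherwise), so that the hypothesis $xy \notin E_i(D,s)$ is invoked exactly once per coordinate---to check that the ``otherwise'' branch fires at the non-threshold point---and no separate comparable/incomparable subcases are needed.
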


\begin{proof}
Suppose that $xy\in E(\G(D,s))$. Thus without loss of generality $x\le_i y$ and $x_i>y_i$ for some $i$. Let $f\in F(D,s)$. If $f(x)=x$, then by the monotony of $f_i$ we have $y_i<x_i=f_i(x)\le f_i(y)$. Thus $f$ cannot fix both $x$ and $y$ at the same time, i.e. $xy$ is an edge of $G(D,s)$. 

Conversely, suppose that $xy\not\in E(\G(D,s))$. Let $I$ be the set of $i\in [n]$ such that $x_i<y_i$. We define $f:[s]^n\to [s]^n$ as follows: for all $z\in [s]^n$,  
\[
	\forall i\in I,\quad 
	f_i(z)=
	\begin{cases}
	x_i&\text{ if }z\le_i x\\
	y_i&\text{ otherwise}
	\end{cases},
	\qquad 
	\forall i\in [n]\setminus I,\quad 
	f_i(z)=
	\begin{cases}
	y_i&\text{ if }z\le_i y\\
	x_i&\text{ otherwise}
	\end{cases}
\]
Let $i\in I$. Since $x_i<y_i$ and $xy\not\in E(\G(D,s))$, we have $y\not\le_i x$, thus $f_i(x)=x_i$ and $f_i(y)=y_i$. Furthermore, it is easy to see that $f_i$ is monotonous with respect to $\le_i$. Let $i\not\in I$, and suppose first that $y_i<x_i$. Since $xy\not\in E(\G(D,s))$, we have $x\not\le_i y$, thus $f_i(y)=y_i$ and $f_i(x)=x_i$. Furthermore, it is easy to see that $f_i$ is monotonous with respect to $\le_i$. Finally, if $x_i=y_i=c$ then $f_i=\cst=c$, and $f_i$ is trivially monotonous with respect to $\le_i$. Thus $f(x)=x$, $f(y)=y$, and by the monotony of the $f_i$, we have $f\in F(D,s)$. Thus there is no edge between $x$ and $y$ in $\G(D,s)$.     
\end{proof}

\begin{example} \label{ex:G}
The guessing graph of some classes of signed digraphs can be easily determined (the proof is an easy exercise).
\begin{itemize}
	\item If $D$ has a non-negative loop on each vertex, then $\G(D,s)$ is empty.
	
	\item For $C_n^+$, the directed cycle on $n$ vertices with all arcs signed positively,
	$$
		E(\G(C_n^+, s)) = \{xy: x < y \text{ or } x > y\}.
	$$
	This can be extended to any directed cycle with an even number of arcs signed negatively (and hence the cycle has positive sign).
	
	\item If $D$ is acyclic, then $\G(D,s)$ is complete; this is shown in \cite{GR11} for the unsigned case. We shall generalise this in Proposition \ref{prop:G_complete} below.
\end{itemize}
\end{example}

\begin{proposition} \label{prop:G_complete}
If $D$ has no non-negative cycle then $\G(D,s)$ is complete. 
\end{proposition}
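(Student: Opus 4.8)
\emph{Proof proposal.} By Definition~\ref{def:guessing_graph}, the statement is equivalent to: no $f\in F(D,s)$ has two distinct fixed points. One can obtain this for free from the known upper bound \eqref{eq:old_bounds}, since the hypothesis says precisely that $\emptyset$ is a non-negative feedback vertex set, so $k^+=0$ and hence $g(D,s)\le 0$, i.e. $|\Fix(f)|\le 1$ for all $f\in F(D,s)$, whence every pair is an edge of $\G(D,s)$. I would nevertheless prefer a direct argument, which makes transparent the role of the non-negativity assumption and parallels the unsigned/acyclic case mentioned in Example~\ref{ex:G}.

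The plan is to assume for contradiction that some $f\in F(D,s)$ fixes two distinct points $x\ne y$, set $I=\{i\in[n]:x_i\ne y_i\}\ne\emptyset$, and define $\sigma\colon I\to\{-1,+1\}$ by $\sigma(i)=+1$ if $x_i<y_i$ and $\sigma(i)=-1$ if $x_i>y_i$. The key local step is to show that every $i\in I$ admits an in-neighbour $j\in N(i)\cap I$ that is \emph{coherent}, meaning $\lambda(j,i)=0$ or $\lambda(j,i)=\sigma(i)\sigma(j)$. To prove this I would argue by contradiction: if $i\in I$ has no coherent in-neighbour in $I$, then $N^0(i)\cap I=\emptyset$, every $j\in N^+(i)\cap I$ satisfies $\sigma(j)=-\sigma(i)$, and every $j\in N^-(i)\cap I$ satisfies $\sigma(j)=\sigma(i)$. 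Plugging these facts into the definition of $\le_i$ shows that $x$ and $y$ are $\le_i$-comparable, in the direction opposite to $\sigma(i)$: if $\sigma(i)=+1$ then $y\le_i x$, so Lemma~\ref{pro:monotonous} gives $f_i(y)\le f_i(x)$, i.e. $y_i\le x_i$, contradicting $\sigma(i)=+1$; the case $\sigma(i)=-1$ is symmetric, giving $x\le_i y$ and the analogous contradiction. (The sub-case $N(i)\cap I=\emptyset$ is covered by the same computation, which then forces $f_i(x)=f_i(y)$, impossible since $x_i\ne y_i$.)

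Granting the local step, I would build a directed cycle inside $I$: choose a coherent in-neighbour $\pi(i)\in N(i)\cap I$ for each $i\in I$, iterate $\pi$ from an arbitrary vertex of $I$, and extract a cycle $C$ at the first repeated vertex. Every arc $(j',j)$ of $C$ then satisfies $\lambda(j',j)=0$ or $\lambda(j',j)=\sigma(j')\sigma(j)$. If some arc of $C$ has sign $0$, the sign of $C$ is $0$; otherwise the sign of $C$ equals $\prod_{(j',j)\in C}\sigma(j')\sigma(j)$, and since each vertex of $C$ contributes its $\sigma$-value twice (once as the source of its outgoing arc, once as the target of its incoming arc), this product is $1$. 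Either way $C$ is a non-negative cycle of $D$, contradicting the hypothesis.

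The main obstacle is purely the bookkeeping in the local step: one must verify that ``no coherent in-neighbour in $I$'' is exactly the condition making $x$ and $y$ comparable for $\le_i$, and keep the two sign cases straight. This is essentially a repackaging of Proposition~\ref{pro:edges} (in that situation the pair $xy$ would lie in $E_i(D,s)$), so a slightly more streamlined alternative would be to prove directly that every pair $x\ne y$ belongs to some $E_i(D,s)$, the failure of which again yields a non-negative cycle via the same closed-walk construction.
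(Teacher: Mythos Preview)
Your proposal is correct and follows essentially the same approach as the paper: the paper's proof is precisely the ``streamlined alternative'' you sketch at the end, working directly from the characterization of $E_i(D,s)$ in Proposition~\ref{pro:edges} (so that $x_i>y_i$ and $xy\notin E_i(D,s)$ gives $x\not\le_i y$, hence a coherent $i^*\in I$) rather than through a particular $f$ and Lemma~\ref{pro:monotonous}; the closed-walk construction and the telescoping sign product $\prod \sigma(j')\sigma(j)=1$ are identical. The paper does not invoke the $k^+=0$ shortcut you note in your first paragraph.
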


\begin{proof}
Suppose that there is no edge between $x$ and $y$ in $\G(D,s)$, and let $I$ be the set of $i$ with $x_i\neq y_i$. For all $i\in I$ such that $x_i>y_i$ we have $x\not\le_i y$ thus there exists at least one vertex, say $i^*$, such that 
\[
i^*\in N^0(i)\text{ and }x_{i^*}\neq y_{i^*},
\quad\text{or}\quad 
i^*\in N^+(i)\text{ and }x_{i^*}> y_{i^*},
\quad\text{or}\quad 
i^*\in N^-(i)\text{ and }x_{i^*}< y_{i^*}.
\]
Similarly, for all $i\in I$ such that $y_i>x_i$ we have $y\not\le_i x$ thus a vertex $i^*$ such that
\[
i^*\in N^0(i)\text{ and }y_{i^*}\neq x_{i^*},
\quad\text{or}\quad 
i^*\in N^+(i)\text{ and }y_{i^*}> x_{i^*},
\quad\text{or}\quad 
i^*\in N^-(i)\text{ and }y_{i^*}< x_{i^*}.
\]
Note that we have the following property:
\begin{equation}\label{eq:lambda}
\forall i\in I,\qquad \lambda(i^*,i)\neq 0\quad\Rightarrow \quad \lambda(i^*,i)=\frac{\text{sign}{(x_i-y_i)}}{\text{sign}{(x_{i^*}-y_{i^*})}}.
\end{equation}

We are now in position to prove the proposition. Since $i^*\in I$ for all $i\in I$, there exists a cycle $C$ such that all the arcs of $C$ are of the form $(i^*,i)$. In other words, there exists a cycle $C=i_0,i_1,\dots,i_{k-1},i_0$ such that $i^*_{l+1}=i_l$ for all $l\in [k]$ (indices are computed modulo $k$). By hypothesis if some arc of this cycle has sign $0$, then $C$ is non-negative, a contradiction. Thus $C$ has no $0$ sign, and we deduce from (\ref{eq:lambda}) that the sign $\sigma$ of $C$ is 
\[
\sigma=\frac{\text{sign}{(x_{i_1}-y_{i_1})}}{\text{sign}{(x_{i_0}-y_{i_0})}}
\frac{\text{sign}{(x_{i_2}-y_{i_2})}}{\text{sign}{(x_{i_1}-y_{i_1})}}
\cdots
\frac{\text{sign}{(x_{i_{k-1}}-y_{i_{k-1}})}}{\text{sign}{(x_{i_{k-2}}-y_{i_{k-2}})}}
\frac{\text{sign}{(x_{i_0}-y_{i_0})}}{\text{sign}{(x_{i_{k-1}}-y_{i_{k-1}})}}\\
=1
\]
which is a contradiction.
\end{proof}

For any undirected unsigned simple graph $G$, we denote the independence number of $G$ as $\alpha(G)$.

\begin{theorem} \label{th:g=alpha}
For every non-empty independent set $Z$ of $\G(D,s)$ there exists $f\in F(D,s)$ such that $Z\subseteq \Fix(f)$, and hence $g(D,s)=\log_s\alpha(\G(D,s))$.
\end{theorem}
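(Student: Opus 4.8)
The plan is to establish the first assertion directly and then deduce the guessing number formula. First I would fix a non-empty independent set $Z \subseteq [s]^n$ of $\G(D,s)$ and exhibit a single $f \in F(D,s)$ fixing every point of $Z$. The natural candidate is a ``nearest-point'' style map: for each coordinate $i$ and each $z \in [s]^n$, let $f_i(z)$ be determined by which elements of $Z$ lie below $z$ in the order $\le_i$. Concretely, I would try $f_i(z) = \max\{ w_i : w \in Z,\ w \le_i z\}$ when this set is non-empty, and some fixed default (say $f_i(z) = \min\{w_i : w \in Z\}$) otherwise. The generalisation of the two-point construction in the proof of Proposition~\ref{pro:edges} is exactly of this shape, so I expect this to be the right object.

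The verification splits into two parts. \textbf{Monotonicity:} I must check $x \le_i y \Rightarrow f_i(x) \le f_i(y)$ for each $i$, so that Lemma~\ref{pro:monotonous} gives $f \in F(D,s)$. Since $\le_i$ is a partial order on $[s]^n$ and $w \le_i x \le_i y$ implies $w \le_i y$, the set of $w \in Z$ below $x$ is contained in the set below $y$, so the ``$\max$'' can only increase; one just needs the default branch to be chosen small enough not to break this, which the choice above arranges. \textbf{Fixing $Z$:} I must show $f(w) = w$ for every $w \in Z$, i.e.\ $\max\{ v_i : v \in Z,\ v \le_i w\} = w_i$ for all $i$. The inequality ``$\ge$'' is immediate since $w \le_i w$. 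For ``$\le$'', suppose some $v \in Z$ has $v \le_i w$ but $v_i > w_i$; then by Proposition~\ref{pro:edges} the pair $vw$ is an edge of $\G(D,s)$ (it lies in $E_i(D,s)$), contradicting independence of $Z$. This last point is the crux, and it is where independence of $Z$ is used in an essential way.

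Having produced such an $f$, I would take $Z$ to be a maximum independent set, giving $|Z| = \alpha(\G(D,s))$ and hence $g(D,s) \ge \log_s \alpha(\G(D,s))$. For the reverse inequality, let $f \in F(D,s)$ achieve the maximum number of fixed points. By Definition~\ref{def:guessing_graph}, no edge of $\G(D,s)$ joins two points of $\Fix(f)$, so $\Fix(f)$ is an independent set; therefore $|\Fix(f)| \le \alpha(\G(D,s))$, which gives $g(D,s) \le \log_s \alpha(\G(D,s))$. Combining the two bounds yields the stated equality. I would also note that $\Fix(f)$ is non-empty for the optimal $f$ (e.g.\ a constant network lies in $F(D,s)$ and has a fixed point), so $\alpha(\G(D,s)) \ge 1$ and the logarithm is well defined.

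The main obstacle I anticipate is entirely in the ``fixing $Z$'' step: one must be careful that independence of $Z$ rules out \emph{all} obstructions to $f_i(w) = w_i$, not merely the ``strictly above in $\le_i$'' ones. In particular the edge set in Proposition~\ref{pro:edges} is a union over $i$ of the sets $E_i(D,s)$, and a would-be violation at coordinate $i$ for the point $w$ produces precisely an element of $E_i(D,s)$; so no extra case analysis beyond invoking Proposition~\ref{pro:edges} should be needed. A secondary, more bookkeeping-level obstacle is confirming that the default branch in the definition of $f_i$ does not spoil monotonicity — choosing the default to be a global coordinatewise minimum over $Z$ (or, alternatively, defining $f_i$ via a max over a suitably padded set) handles this cleanly.
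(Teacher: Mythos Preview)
Your proposal is correct and follows essentially the same approach as the paper: the paper constructs $f_i(x) = \min\bigl(\{z_i : z \in Z,\ x \le_i z\} \cup \{\max_{z \in Z} z_i\}\bigr)$, while you take the dual $f_i(x) = \max\bigl(\{w_i : w \in Z,\ w \le_i x\} \cup \{\min_{w \in Z} w_i\}\bigr)$; in both cases monotonicity follows from the nesting of the relevant sets under $\le_i$, and the fixing of $Z$ from Proposition~\ref{pro:edges} applied to a would-be violation. A minor quibble: $\le_i$ is only a preorder on $[s]^n$ (it ignores coordinates outside $N(i)$), but since your argument uses only reflexivity and transitivity this is harmless.
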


\begin{proof}
Foremost, by definition of the guessing graph, the set of fixed points of $f$ must form an independent set in the guessing graph, hence $g(D,s) \le \log_s \alpha(\G(D,s))$.

Conversely, for all $x\in [s]^n$, we set $Z(x,i)=\{z:z\in Z,~x\le_i z\}$ and we define $f:[s]^n\to [s]^n$ by 
\[
	\forall i\in [n],~\forall x\in [s]^n,\qquad 
	f_i(x)=\min(\{z_i:z\in Z(x,i)\}\cup\{\max(\{z_i:z\in Z\})\}).
\]  
If $x\le_i y$, then $Z(y,i)\subseteq Z(x,i)$ thus $f_i(x)\le f_i(y)$. Hence $f_i$ is monotonous with respect to $\le_i$, so $f\in F(D,s)$. Let $z\in Z$. Since $z\in Z(z,i)$ we have $f_i(z)\le z_i$. So if $f_i(z)\neq z_i$, there exists $y\in Z(z,i)$ such that $f_i(z)=y_i<z_i$. Thus we have $z\le_i y$ and $z_i>y_i$. But then according to the previous proposition, $\G(D,s)$ has an edge between $y$ and $z$, thus $Z$ is not an independent set, a contradiction. This means that $f_i(z)=z_i$ for all $i\in [n]$. Thus $f(z)=z$ for all $z\in Z$. 
\end{proof}

\begin{example} \label{ex:g}
The guessing number of some special digraphs can then be easily determined. Although the following are already known, they illustrate how to use the guessing graph approach to determine the guessing number.
\begin{itemize}
	\item If $D$ has a non-negative loop on each vertex, then $g(D,s) = n$; this is achieved by the identity function $f_i(x) = x_i$.

	\item If $D$ contains no non-negative cycle, then $g(D,s) = 0$. This was first proved in \cite{RC07}.
	
	\item For $C_n^+$, we obtain $g(C_n^+,s) = 1$; this is achieved by the function $f_i(x) = x_{i-1 \mod n}$.
\end{itemize}
\end{example}

%
	%
	%

We make the following remarks on the guessing graph of signed digraphs.
\begin{enumerate}
	\item Our definition is consistent with the guessing graph of unsigned digraphs introduced in \cite{GR11}. 
	
	\item If $D$ is a signed digraph and $D^0$ is the unsigned digraph with the same vertices and arcs as $D$, then $\G(D^0,s)$ is a spanning subgraph of $\G(D,s)$. Therefore, adding signs to a digraph can only reduce the number of fixed points.
	
	\item If $D$ is a signed digraph without any loops and $D'$ is obtained by adding a negative loop on some vertices of $D$, then $\G(D,s) = \G(D',s)$. Therefore, adding a negative loop on a vertex does not affect the sets of fixed points.
\end{enumerate}

\subsection{Bounds on the number of fixed points based on the guessing graph} \label{sec:bounds_guessing}

Based on the guessing graph, we shall derive bounds on the binary guessing number $g(D,2)$.

Let $D$ be a signed digraph on $[n]$ and $x\in [2]^n$. An arc $(j,i)$ of $D$ is \BF{$x$-frustrated} if $x_j\neq x_i$ and $\lambda(j,i)=1$ or $x_j= x_i$ and $\lambda(j,i)=-1$. We say that $D$ is $x$-frustrated if all its arcs are. Given $I\subseteq [n]$, we denote by $N(I,x)$ the set of vertices $i\in N(I)\setminus I$ such that all the arcs from $i$ to $I$ are $x$-frustrated. Note that if $D$ is unsigned, then $N(I,x)=\emptyset$ and $D[I]$ is $x$-frustrated if and only if $I$ is an independent set.

\begin{proposition}
The degree of a vertex $x$ in $\G(D,2)$ is 
\[
d(x)=
\sum_{
\begin{scriptsize}
\begin{array}{c}
I\subseteq [n]\\
D[I]\textrm{ is $x$-frustrated}
\end{array}
\end{scriptsize}}
(-1)^{|I|-1}2^{n-|N(I)\cup I|+|N(I,x)|}.
\]
\end{proposition}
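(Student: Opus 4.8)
The plan is to read $d(x)$ as the cardinality of a union of $n$ sets and then apply inclusion--exclusion. By Proposition~\ref{pro:edges}, for each $i\in[n]$ put $A_i=\{y\in[2]^n:xy\in E_i(D,2)\}$; then $d(x)=\bigl|\bigcup_{i\in[n]}A_i\bigr|$, and since $xy\in E_i$ forces $x_i\neq y_i$ we have $x\notin A_i$, so no diagonal term intrudes. Inclusion--exclusion gives $d(x)=\sum (-1)^{|I|-1}\bigl|\bigcap_{i\in I}A_i\bigr|$ with $I$ ranging over nonempty subsets of $[n]$, so the whole job is to describe $\bigcap_{i\in I}A_i$ and count it.

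First I would unwind $A_i$ in the binary setting. Using the definition of $\le_i$, splitting on $x_i\in\{0,1\}$, and examining each arc $(j,i)$ according to its sign, one checks that $y\in A_i$ if and only if $y_i=1-x_i$ and, for every $j\in N(i)$, $y_j=x_j$ is forced when the arc $(j,i)$ is \emph{not} $x$-frustrated while $y_j$ is arbitrary when $(j,i)$ \emph{is} $x$-frustrated (coordinates outside $N(i)\cup\{i\}$ being unconstrained). The content is simply that, for an in-neighbour $j$ of $i$, the constraint imposed by $x\le_i y$ (resp.\ $y\le_i x$) collapses to ``$y_j$ may differ from $x_j$'' exactly in the combination of $x_j$, $x_i$ and $\lambda(j,i)$ that the notion of an $x$-frustrated arc encodes; this is the one genuine case check in the proof, and each of its six cases is immediate.

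Next I would intersect these descriptions over $i\in I$. Each $i\in I$ pins $y_i=1-x_i$, and each $j\in N(I)$ lying on some non-$x$-frustrated arc into $I$ pins $y_j=x_j$. If an arc $(j,i)$ of $D[I]$ is not $x$-frustrated, then $j\in I$ is pinned to both $1-x_j$ and $x_j$, so $\bigcap_{i\in I}A_i=\emptyset$ unless $D[I]$ is $x$-frustrated. Conversely, when $D[I]$ is $x$-frustrated no clash can occur --- a coordinate cannot at once lie in $I$ and be a non-$x$-frustrated in-neighbour of $I$, for that would be a non-$x$-frustrated arc inside $D[I]$ --- so the pinned coordinates are precisely $I$ together with $(N(I)\setminus I)\setminus N(I,x)$, that is, $(N(I)\cup I)\setminus N(I,x)$, each to a single value, and the remaining $n-|N(I)\cup I|+|N(I,x)|$ coordinates are free. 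Hence $\bigl|\bigcap_{i\in I}A_i\bigr|=2^{\,n-|N(I)\cup I|+|N(I,x)|}$ if $D[I]$ is $x$-frustrated and $0$ otherwise, and plugging this into the inclusion--exclusion sum gives the claimed identity.

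The difficulty here is bookkeeping, not insight: the delicate point is verifying in the last step that the pinned values are mutually consistent (which is exactly what the hypothesis ``$D[I]$ is $x$-frustrated'' guarantees) and that $N(I,x)$ --- the in-neighbours of $I$ meeting $I$ only through $x$-frustrated arcs --- is correctly identified with the in-neighbours of $I$ that survive as free coordinates in the intersection. Everything else is the routine inclusion--exclusion evaluation of a vertex degree.
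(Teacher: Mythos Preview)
Your proposal is correct and follows essentially the same route as the paper: both apply inclusion--exclusion to the sets $E_i(D,2)\cap\{x\}$ (your $A_i$), identify that $\bigcap_{i\in I}A_i$ is empty unless $D[I]$ is $x$-frustrated, and then count the free coordinates as $n-|N(I)\cup I|+|N(I,x)|$. Your write-up is somewhat more explicit in first characterising a single $A_i$ before intersecting, whereas the paper works directly with the intersection, but the argument is the same.
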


\begin{proof}
By the inclusion-exclusion principle, we have 
\[
d(x)=|E(\G(D,2))\cap\{x\}|=\left|\bigcup_{i\in[n]}E_i(D,2)\cap\{x\}\right|=\sum_{I\subseteq [n]}(-1)^{|I|-1}|E_I(D,2)\cap\{x\}|
\]
where $E_I(D,2)=\bigcap_{i\in I} E_i(D,2)$ and hence we have only to determine $|E_I(D,2)\cap\{x\}|$ for all $I\subseteq [n]$. We have $xy\in E_I(D,2)$ if and only if $x\le_i y$ and $x_i>y_i$ or $y\le_i x$ and $y_i>x_i$ for all $i\in I$. Suppose that $xy\in E_I(D,2)$. Let $(j,i)$ be an arc of $D$ with $i\in I$. Suppose that $x_j=x_i$ and $\lambda(j,i)=1$. If $x_j\neq y_j$ then $x_j=x_i\neq y_i=y_j$ thus $x_i>y_i \Rightarrow x_j>y_j \Rightarrow x\not\le_i y$ and $y_i>x_i \Rightarrow y_j>x_j \Rightarrow y\not\le_i x$, a contradiction.  Thus
\[
x_j=y_j\text{ and }\lambda(j,i)=1\Rightarrow x_j=y_j
\]
and we prove similarly that 
\[
x_j\neq y_j\text{ and }\lambda(j,i)=-1\Rightarrow x_j=y_j.
\]
If $j\in I$ we have $x_j\neq y_j$ thus $(j,i)$ is $x$-frustrated. We deduce that $D[I]$ is $x$-frustrated. Suppose that $j\not\in I$. If $y_j\neq x_j$ then $(j,i)$ is $x$-frustrated thus $j\in N(I,x)$. Thus for all $i\in [n]$ we have $y_i\neq x_i$ if $i\in I$, we have $y_i=x_i$ if $i\in J=(N(I)\setminus I)\setminus N(I,x)$, and the component $y_i$ is free in the other cases. Thus $|E_I(D,2)\cap \{x\}|=2^{n-|I|-|J|}=2^{n-|N(I)\cup I|+|N(I,x)|}$. 
\end{proof}

We have 
\begin{multline*}
|E_i(D,s)\cap\{x\}|=s^{n-d_i-1}\Big((s-x_i)\prod_{j\in N^+(i)\setminus i}(x_j+1)\prod_{j\in N^-(i)\setminus i}(s-x_j+1)+\\x_i\prod_{j\in N^-(i)\setminus i}(x_j+1)\prod_{j\in N^+(i)\setminus i}(s-x_j+1)\Big)
\end{multline*}
thus there is no simple expression of $d(x)$ for $s>2$. 

Unlike the unsigned case, the guessing graph is not regular, let alone a Cayley graph. Therefore, some techniques used in \cite{GR11} for the guessing graph of unsigned digraphs cannot be applied here. We shall nonetheless derive two lower bounds on the guessing number. Both are based on the famous lower bound on the independence number of an undirected graph $G$ on $n$ vertices and with average degree $d$ (a corollary of Tur\'an's theorem \cite{Tol97}):
$$
	\alpha(G) \ge \frac{n}{d + 1}.
$$
Our first bound is a direct application of this bound for the whole guessing graph, while our second bound only considers a specific induced subgraph of the guessing graph.




\begin{theorem} \label{th:bound_g1}
We have 
\[
	g(D,2)\ge \delta^0 + \left(\log_2 \frac{4}{3} \right) \delta^\pm   - \log_2 n.
\] 
\end{theorem}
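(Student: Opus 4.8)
The plan is to bound $\alpha(\G(D,2))$ from below via the Turán-type inequality $\alpha(G)\ge |V(G)|/(\bar d+1)$, where $\bar d$ is the average degree, and then apply Theorem~\ref{th:g=alpha}. Since $\G(D,2)$ has $2^n$ vertices, it suffices to produce a good \emph{upper} bound on the average degree $\frac{1}{2^n}\sum_x d(x)$; equivalently, on the total number of edges. Rather than use the exact inclusion–exclusion formula for $d(x)$, I would bound each $d(x)$ directly from the union bound
\[
d(x)=\Big|\bigcup_{i\in[n]} E_i(D,2)\cap\{x\}\Big|\le \sum_{i\in[n]}\big|E_i(D,2)\cap\{x\}\big|.
\]
For $s=2$ the quantity $|E_i(D,2)\cap\{x\}|$ simplifies considerably: from the general formula in the excerpt with $s=2$, the edge through $x$ in direction $i$ forces $y_i\neq x_i$ and forces $y_j=x_j$ for every in-neighbour $j$ of $i$ whose arc $(j,i)$ is \emph{not} $x$-frustrated, while leaving the remaining coordinates free. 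So $|E_i(D,2)\cap\{x\}| = 2^{\,n-1-(d_i - \phi_i(x))}$, where $d_i$ is the in-degree of $i$ and $\phi_i(x)$ counts the $x$-frustrated unate in-arcs into $i$ (the $0$-signed in-arcs always contribute to the "forced-equal" set and hence never to $\phi_i(x)$).

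The key point is to control $\phi_i(x)$ on average over $x$. Fix a vertex $i$ with $d_i^0$ zero-arcs and $d_i^++d_i^-$ unate arcs into it. A unate in-arc $(j,i)$ is $x$-frustrated with probability exactly $1/2$ over a uniformly random $x\in[2]^n$, and different in-arcs into $i$ involve distinct source coordinates $x_j$, so the frustration events are independent. Hence the average of $2^{\phi_i(x)}$ over $x$ is $\big(\tfrac12\cdot 1 + \tfrac12\cdot 2\big)^{d_i^++d_i^-} = (3/2)^{d_i^++d_i^-}$. Therefore
\[
\frac{1}{2^n}\sum_{x}\big|E_i(D,2)\cap\{x\}\big|
= 2^{\,n-1-d_i}\cdot\frac{1}{2^n}\sum_x 2^{\phi_i(x)}
= \tfrac12\, 2^{-d_i}(3/2)^{d_i^++d_i^-}
= \tfrac12\, 2^{-d_i^0}(3/4)^{d_i^++d_i^-},
\]
using $d_i=d_i^0+d_i^++d_i^-$. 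Summing over $i$ and using $d_i^0\ge \delta^0$ and $d_i^++d_i^-\ge\delta^\pm$ for every $i$ (note $(3/4)^{d_i^++d_i^-}\le (3/4)^{\delta^\pm}$ since $3/4<1$), the average degree satisfies
\[
\bar d \;\le\; \frac{1}{2^n}\sum_x\sum_i |E_i(D,2)\cap\{x\}|
\;\le\; n\cdot 2^{-\delta^0}(3/4)^{\delta^\pm}.
\]

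Plugging this into $\alpha(\G(D,2))\ge 2^n/(\bar d+1)\ge 2^n/\big(n\,2^{-\delta^0}(3/4)^{\delta^\pm}\big)$ (absorbing the $+1$ is harmless up to the stated bound, and one can be slightly more careful if an exact constant is wanted, but the logarithmic form below swallows it) and taking $\log_2$, Theorem~\ref{th:g=alpha} gives
\[
g(D,2)=\log_2\alpha(\G(D,2))\;\ge\; n - \log_2 n + \delta^0 + \delta^\pm\log_2\tfrac43
= \delta^0 + \Big(\log_2\tfrac43\Big)\delta^\pm - \log_2 n,
\]
as claimed. The main obstacle is the second step: verifying that for a fixed $i$ the frustration indicators of its distinct in-arcs are genuinely independent (they depend on pairwise-distinct source variables, and self-loops need a moment's thought — a self-loop $(i,i)$ ties the "frustrated" event to $x_i$ itself, but since $E_i$ already fixes $y_i\neq x_i$ the self-loop case is consistent and contributes the same $1/2$ factor), so that the expectation of the product factorises into $(3/2)^{d_i^++d_i^-}$. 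Everything else is the union bound, linearity of expectation, and Turán.
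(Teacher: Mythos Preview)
Your approach is essentially identical to the paper's: bound the average degree of $\G(D,2)$ via the union bound $d(x)\le\sum_i|E_i(D,2)\cap\{x\}|$, compute $\mathbb E_x[2^{\phi_i(x)}]=(3/2)^{d_i^++d_i^-}$, and feed the result into Tur\'an's inequality. The paper does exactly this, phrasing the middle step as $\sum_x 2^{|N(i,x)|}=2^{n-d_i^+-d_i^-}\cdot 3^{d_i^++d_i^-}$ via the binomial theorem rather than via independence, but the computation is the same.

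However, your arithmetic contains a genuine slip. From
\[
\frac{1}{2^n}\sum_x|E_i(D,2)\cap\{x\}|=2^{n-1-d_i}\cdot\mathbb E_x[2^{\phi_i(x)}]=2^{n-1-d_i}(3/2)^{d_i^++d_i^-},
\]
you jump to $\tfrac12\,2^{-d_i}(3/2)^{d_i^++d_i^-}$, losing a factor of $2^n$. Consequently your bound $\bar d\le n\cdot 2^{-\delta^0}(3/4)^{\delta^\pm}$ is far too small (it would force $g(D,2)\ge n+\delta^0+\ldots$, exceeding the trivial bound $g\le n$), and in the last displayed line you silently drop the extraneous $n$ by writing $n-\log_2 n+\delta^0+\ldots=\delta^0+\ldots-\log_2 n$. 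The correct average degree bound is $\bar d\le n\,2^{n-1-\delta^0}(3/4)^{\delta^\pm}$, after which Tur\'an gives the stated inequality.

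A smaller point: your independence justification (``pairwise-distinct source variables'') is not quite complete, since every frustration event for an arc into $i$ also depends on $x_i$. The clean argument is to condition on $x_i$: then the indicators for the distinct $j\in N^+(i)\cup N^-(i)$ (with $j\ne i$) are independent Bernoulli$(1/2)$, and the conditional expectation is $(3/2)^{d_i^++d_i^-}$ regardless of the value of $x_i$.
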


\begin{proof}
For all $x\in[2]^n$ and $i\in [n]$ we have 
\[
	|E_i(D,2)\cap \{x \}|=2^{n-d_i-1+|N(i,x)|},
\]
whence
\begin{align*}
	|E_i(D,2)|&=\frac{1}{2}\sum_{x}|E_i(D,2)\cap \{ x \}|\\[1mm]
	&=\frac{1}{2}\sum_{x} 2^{n-d_i-1+|N(i,x)|}\\[1mm]
	&=2^{n-d_i-2}\sum_{x} 2^{|N(i,x)|}.
\end{align*}
We need to evaluate the sum $\sum_{x} 2^{|N(i,x)|}$. Firstly, we remark that $|N(i,x)|$ does not depend on the value of $x$ outside of $N^+(i) \cup N^-(i)$. Moreover, for any $x$, let $z = (x_{N^+(i)} + x_i, x_{N^-(i)} + x_i + 1)$ with addition done componentwise, then $|N(i,x)|$ is the number of ones in $z$. Since there are ${d^+_i + d^-_i \choose k}$ choices for $z$ with exactly $k$ ones, there are $2^{n - d_i^+ - d_i^-} {d^+_i+d^-_i \choose k} $ states $x$ such that $|N(i,x)| = k$, whence
\begin{align*}
	|E_i(D,2)| 	&= 2^{n - d_i - 2} 2^{n - d_i^+ - d_i^-} \sum_{k=0}^{d^+_i + d^-_i}{d^+_i+d^-_i \choose k} 2^k\\[1mm]
	&=2^{2n - d_i^0 - 2d^+_i - 2d^-_i - 2} \, 3^{d^+_i+d^-_i}\\[1mm]
	&\le 2^{2n - \delta^0 -2} \left(\frac{3}{4}\right)^{\delta^\pm}.
\end{align*}
Thus
\begin{align*}
	d(\G(D,2)) &= \frac{2|E(\G(D,2))|}{2^n} =\frac{2|\cup_iE_i(D,2)|}{2^n} \le\frac{2\sum_i|E_i(D,2)|}{2^n}
	\le n2^{n- \delta^0 - 1} \left(\frac{3}{4}\right)^{\delta^\pm}\\
	\alpha(\G(D,2)) &\ge \frac{2^n}{d(G(D,2))+1}
	\ge \frac{2^n}{n2^{n-\delta^0} \left( \frac{3}{4} \right)^{\delta^\pm}}
	= 2^{\delta^0} \left( \frac{4}{3} \right)^{\delta^\pm} n^{-1}\\
	g(D,2) &= \log_2\alpha(\G(D,2))\ge \delta^0 +  \left(\log_2 \frac{4}{3} \right) \delta^\pm  - \log_2 n.
\end{align*}

\end{proof}

We remark that the bound above is smaller when there are more arcs in $D$ that are signed positively or negatively. In particular, if $D^0$ is unsigned then $\delta^0 =\delta$ and $\delta^\pm=0$ thus  
\[
	g(D^0,2) \ge \delta^0 - \log_2 n
\]
(see \cite{GR11}) and if $D^\pm$ is unate then $\delta^0 = 0$ and $\delta^\pm=\delta$ thus 
\[
	g(D^\pm,2) \ge \left( \log_2 \frac{4}{3} \right) \delta^\pm -\log_2 n.
\]
The bound in Theorem \ref{th:bound_g1} can be improved for the case of digraphs where most arcs are signed positively or negatively.

\begin{theorem} \label{th:bound_g2}
For any signed digraph $D$ with minimum in-degree $\delta\ge \frac{\ln(4n)}{2}$, 
$$
	g(D,2) \ge \frac{\delta}{2} - \sqrt{\frac{\ln(4n) \delta}{2}} - \log_2 n - 1.
$$
\end{theorem}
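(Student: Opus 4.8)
The plan is to refine the degree estimate from the proof of Theorem~\ref{th:bound_g1} by discarding the vertices of the guessing graph whose degree is abnormally large, and then applying the Tur\'an-type bound $\alpha(G)\ge n/(d+1)$ only to the induced subgraph on the surviving vertices. Concretely, from the computation in Theorem~\ref{th:bound_g1} we have, for each $i$ and each $x$,
\[
	|E_i(D,2)\cap\{x\}| = 2^{n-d_i-1+|N(i,x)|},
\]
and $|N(i,x)|$ is the number of ones in the vector $z=(x_{N^+(i)}+x_i,\,x_{N^-(i)}+x_i+1)$, a quantity that, as $x$ ranges uniformly, behaves like a $\mathrm{Bin}(d_i^++d_i^-,1/2)$ random variable. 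So the degree $d(x)=\sum_i |E_i(D,2)\cap\{x\}|$ is a sum of terms each of which is $2^{n-d_i-1}$ times $2^{|N(i,x)|}$. The idea is that for most $x$ each $|N(i,x)|$ is close to $(d_i^++d_i^-)/2$, so $d(x)$ is close to $n\,2^{n/2}$-ish rather than the worst-case $n\,2^n$.

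The key steps, in order, would be: (1) Fix a threshold $t$ (to be optimised) and let $B_i=\{x: |N(i,x)|\ge (d_i^++d_i^-)/2 + t\}$; by a Chernoff/Hoeffding bound for the binomial, $|B_i|/2^n \le \exp(-2t^2/(d_i^++d_i^-)) \le \exp(-2t^2/\delta')$ for a suitable $\delta'$, say $\delta'\le n$ crudely, or better $\delta'=\Delta^\pm$. Set $B=\bigcup_i B_i$, so $|B|\le n\,2^n\exp(-2t^2/\delta')$. (2) On the complement $Y=[2]^n\setminus B$, every vertex $x$ satisfies $|N(i,x)| < (d_i^++d_i^-)/2 + t$ for all $i$, hence
\[
	d(x) \le \sum_{i} 2^{n-d_i-1+(d_i^++d_i^-)/2 + t}
	= 2^{n-1+t}\sum_i 2^{(d_i^++d_i^-)/2 - d_i}
	= 2^{n-1+t}\sum_i 2^{-d_i^0/2 - (d_i^++d_i^-)/2} \le n\,2^{n-1+t-\delta/2}.
\]
(3) Apply $\alpha(\G(D,2)[Y]) \ge |Y|/(d_Y+1)$ where $d_Y$ is the average degree of the induced subgraph, which is at most the maximum degree bound from step~(2); and choose $t$ so that $|B|\le \tfrac12 2^n$, i.e.\ $t = \sqrt{\tfrac{\delta'}{2}\ln(2n)}$, giving $|Y|\ge 2^{n-1}$. (4) Conclude $\alpha(\G(D,2)) \ge \alpha(\G(D,2)[Y]) \ge 2^{n-1}/(n\,2^{n-1+t-\delta/2}+1) \gtrsim 2^{\delta/2 - t}/n$, and take logs: $g(D,2)\ge \delta/2 - t - \log_2 n$, which after bookkeeping of the constants and using $\delta'\le 2\delta$ (or whatever crude bound makes $t\le\sqrt{\ln(4n)\delta/2}$) yields the claimed $\delta/2 - \sqrt{\ln(4n)\delta/2} - \log_2 n - 1$. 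The hypothesis $\delta\ge \ln(4n)/2$ is exactly what guarantees the main term $\delta/2$ dominates the correction, so the bound is non-vacuous.

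I expect the main obstacle to be pinning down the right parameter against which to run the Chernoff bound and carrying the constants through cleanly: one must decide whether to bound $d_i^++d_i^-$ by $\delta$, by the maximum $\Delta^\pm$, or by $n$, and the three choices trade off the size of $|B|$ against the strength of the degree bound on $Y$. A careful version would apply the Hoeffding inequality coordinatewise with the precise parameter $d_i^++d_i^-$ for each $i$ and only at the end replace things by global quantities; a lazier version (which I suspect the authors take, given the clean statement) just uses $d_i^++d_i^-\le \delta$-free crude bounds together with $\sum_i 2^{-d_i/2}\le n\,2^{-\delta/2}$, absorbing slack into the additive $-1$. A secondary subtlety is that $\alpha$ of an induced subgraph lower-bounds $\alpha$ of the whole graph, which is the direction we need, so restricting to $Y$ is legitimate — but one should make sure the average degree of $\G(D,2)[Y]$ is genuinely bounded by the pointwise bound of step~(2), which is immediate since every vertex of $Y$ has small degree \emph{in the full graph}, a fortiori in the subgraph.
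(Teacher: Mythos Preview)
Your high-level plan coincides with the paper's: restrict to a large set $T$ of ``typical'' states on which every $|N(i,x)|$ is close to its mean, bound the degree of $\G(D,2)$ pointwise on $T$, and apply the Tur\'an bound to the induced subgraph. The paper carries this out essentially as you sketch in steps (1)--(4).

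The one genuine gap is in how you reconcile the Hoeffding step with the final degree bound. Your ``lazy version'' does not work: with a uniform threshold $t$, the Hoeffding tail for coordinate $i$ is $\exp(-2t^2/(d_i^++d_i^-))$, and to bound this uniformly over $i$ you are forced to take $\delta'=\Delta^\pm=\max_i(d_i^++d_i^-)$, which in general satisfies no inequality like $\delta'\le 2\delta$; so your claimed passage to $t\le\sqrt{\ln(4n)\,\delta/2}$ is unjustified. What the paper actually does is your ``careful version'': it uses an $i$-dependent threshold $\sqrt{\epsilon\,d_i^\pm}$ with $\epsilon=\ln(4n)/2$, so that the Hoeffding probability becomes $\le 2e^{-2\epsilon}=1/(2n)$ uniformly in $i$, whence $|T|\ge 2^{n-1}$. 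On $T$ the degree bound is then
\[
d(x)\;\le\;\sum_i 2^{\,n-1-d_i+d_i^\pm/2+\sqrt{\epsilon\,d_i^\pm}}
\;\le\;\sum_i 2^{\,n-1-d_i/2+\sqrt{\epsilon\,d_i}},
\]
and here is where the hypothesis $\delta\ge\epsilon$ is actually used (not merely to avoid vacuity, as you suggest): the map $d\mapsto -d/2+\sqrt{\epsilon d}$ is decreasing for $d\ge\epsilon$, so the worst term occurs at $d_i=\delta$, giving $d(x)\le n\,2^{\,n-1-\delta/2+\sqrt{\epsilon\delta}}$. After that, the arithmetic is exactly as you describe and yields the stated inequality.
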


\begin{proof}
The main idea of the proof is to use a set $T$ of ``typical'' states $x \in [2]^n$ such that $T$ is large and the subgraph of $\G(D,2)$ induced by $T$ is sparser than the whole guessing graph. We denote $\epsilon := \frac{\ln(4n)}{2}$.

For any $i \in [n]$ and any $x \in [2]^n$, we have
$$
	|E_i(D,2) \cap \{x\}| = 2^{n-d_i-1 + |N(i,x)|}.
$$
For any $i \in [n]$, let
\begin{align*}
N^\pm(i)&:=(N^+(i)\cup N^-(i))\setminus\{i\}\\
d^\pm_i&:=|N^\pm(i)|\\
J_i &:= \{w : |w - \frac{d^\pm_i}{2}| <  \sqrt{\epsilon d^\pm_i}\},\\
	S_i(w) &:= \left\{x \in [2]^n : |N(i,x)| = w \right\},\\
	T_i &:= \bigcup_{w \in J_i} S_i(w) = \left\{ x \in [2]^n : \left| |N(i,x)| - \frac{d^\pm_i}{2} \right| < \sqrt{\epsilon d^\pm_i} \right\},\\
	T &:= \bigcap_{i=1}^n T_i.
\end{align*}

We first prove that $T$ is large. We have
\begin{align*}
	|S_i(w)| &= 2^{n-d^\pm_i} {d^\pm_i \choose w},\\
	|T_i| &= 2^n \sum_{w \in J_i} 2^{-d^\pm_i} {d^\pm_i \choose w}.
\end{align*}

Seeing each $x_i$ has a random variable following the Bernoulli distribution with parameter $1/2$, by Hoeffding's inequality \cite{Hoe63}, we have
\[
	\sum_{w\not\in J_i}2^{-d^\pm_i}{d^\pm_i \choose w}=\mathrm{Pr}\left(\left|\sum_{j\in N^\pm(i)}x_j-\frac{d^\pm_i}{2}\right|\geq \sqrt{	\epsilon d^\pm_i}\right)\leq2e^{-2\epsilon}
\]
thus
\begin{align*}
	|T_i| = 2^ n \sum_{w \in J_i} 2^{-d^\pm_i} {d^\pm_i \choose w} &\ge (1 - 2 e^{-2 \epsilon}) 2^n.
\end{align*}
By a simple recursion, we then prove that
$$
	|T| \ge (1-2ne^{-2\epsilon}) 2^n = 2^{n-1}.
$$

We now bound the degree of a vertex $x$ in $T$. We have
\begin{align*}
	d(x) &\le \sum_{i=1}^n |E_i(D,2) \cap \{x\}|\\
		&\le 2^{n-1} \sum_{i=1}^n 2^{-d_i+ d^\pm_i/2 + \sqrt{\epsilon d^\pm_i}}\\
	&\le 2^{n-1} \sum_{i=1}^n 2^{-d_i/2 + \sqrt{\epsilon d_i}}\\
	&\le n2^{n-1-\delta/2 + \sqrt{\epsilon \delta}}
\end{align*}
where we use the fact that $\delta\geq \epsilon$, hence the maximum term is with $d_i=\delta$. 

We then consider independent sets contained in $T$. We have
\begin{align*}
	\alpha(\G(D,2)) &\ge \frac{|T|}{n 2^{n-1 - \delta/2 + \sqrt{\epsilon \delta}} + 1}\\
	&\ge \frac{2^{n-1}}{n 2^{n - \delta/2 + \sqrt{\epsilon \delta}}}\\
	&= n 2^{\delta/2 - \sqrt{\epsilon \delta} - 1},\\
	g(D,2) &\ge \delta/2 - \sqrt{\epsilon \delta} - \log_2 n - 1.
\end{align*}
\end{proof}

\section{Error-correcting codes and signed networks} \label{sec:ECC_signed}

In this section, we investigate the properties of the set of fixed points $\Fix(f)$ for $f \in F(D,s)$. In particular, we see it as a code with special distance properties; these allow us to determine bounds on the maximum cardinality of $\Fix(f)$.

\subsection{Error-correcting codes} \label{sec:ECC}

An $s$-ary code $\mathcal{C}$ of length $n$ is simply a subset of $[s]^n$. The main parameter of $\mathcal{C}$ is its minimum distance: 
$$
	\mu_{\min}(\mathcal{C}) = \min \left\{ \mu(c,c') : c, c' \in \mathcal{C}, c \ne c' \right\},
$$
where $\mu$ is some distance function on $[s]^n$. We shall consider the following three distance functions. For any $x,y \in [s]^n$, let $L(x,y) := |\{i : x_i < y_i\}|$. The Hamming distance is defined as
$$
	\dH(x,y) := L(x,y) + L(y,x),
$$
i.e. it is the number of positions where $x$ and $y$ differ. The Max-distance is defined as
$$
	\dM(x,y) := \max\{L(x,y), L(y,x)\},
$$
while  the min-distance is defined as
$$
	\dm(x,y) := \min \{ L(x,y), L(y,x) \}.
$$
The reader who is interested in error-correcting codes with the Hamming distance is directed to the authoritative book by MacWilliams and Sloane \cite{MS77}. Binary codes based on the Max-distance were proposed for correcting asymmetric errors (it is called the asymmetric metric in the literature), such as those that occur in the Z-channel; for a review on these codes, see \cite{Klo95}. The min-distance is not a metric, which prevents the use of typical coding theory techniques. However, we will determine bounds on binary codes with the min-distance by relating them to codes with the Hamming distance.

The maximum cardinality of an $s$-ary code of length $n$ with minimum $\mu$ distance $d$ is denoted as $A_\mu(n,d,s)$. For the Hamming distance, this quantity has been widely studied, see \cite{Bro} for values and bounds for small parameter values. In particular, we have the Gilbert bound
\begin{equation} \label{eq:Gilbert}
	\AH(n,d,2) \ge \frac{2^n}{\sum_{k=0}^{d-1} {n \choose k}},
\end{equation}
the sphere-packing bound
\begin{equation} \label{eq:sphere-packing}
	\AH(n,d,2) \le \frac{2^n}{\sum_{k=0}^{\lfloor \frac{d-1}{2} \rfloor} {n \choose k}},
\end{equation}
and the Singleton bound $\AH(n,d,s) \le s^{n-d+1}$ which for $s=2$ is only attained in trivial cases and is usually much looser than the sphere-packing bound (however, the Singleton bound is tight for large alphabets). For codes with the Max-distance, only the binary case seems to have been studied. The Varshamov bound \cite{Var65} yields 
\begin{equation} \label{eq:Varshamov}
	\AM(n,d,2) \le \frac{2^{n+1}}{\sum_{j=0}^{d-1} {\lfloor n/2 \rfloor \choose j} + {\lceil n/2 \rceil \choose j} }.
\end{equation}
This is not the tightest bound known so far; see \cite{Klo95} for a review of upper bounds on $\AM(n,d,2)$.

The three distances are related in one way for general $s$ and two more ways if $s = 2$. The first relation (for all $s \ge 2$) between these distances simply follows their definitions:
$$
	\Am(n,d,s) \le \AH(n,2d,s) \le \AM(n,d,s).
$$
In the binary case ($s=2$), the second relation is given by the Borden bound \cite{Bor81}
$$
	\AM(n,d,2) \le d \AH(n,2d-1,2).
$$
In the binary case, the third way to relate the Max-distance and the min-distance to the Hamming distance is via the use of constant-weight codes. For any $x \in [s]^n$, let the {\em weight} of $x$ be $W(x) := \sum_{i=0}^{n-1} x_i$ (in $\mathbb{N}$). For any weight $0 \le w \le n(s-1)$, we denote the set of states with weight $w$ as
$$
	B(n,w,s) := \{x \in [s]^n : W(x) = w\}.
$$
A (binary) constant-weight code $\mathcal{C}$ of length $n$ and weight $w$ is simply a subset of $[2]^n$ where all the codewords in $\mathcal{C}$ have weight $w$. We denote the maximum cardinality of a constant-weight code of length $n$, weight $w$ and minimum distance $d$ as $\AH(n,d,w,2)$. Due to their many applications and great theoretical interest, constant-weight codes have been thoroughly studied, see for instance \cite{BSSS90, AVZ00} and \cite{Broa} for a table for small parameter values. In particular, we shall use the Bassalygo-Elias bound \cite{Bas65}
\begin{equation} \label{eq:Bassalygo}
	\AH(n,d,w,2) \ge \frac{{n \choose w}}{2^n} \AH(n,d,2).
\end{equation}
Since $2\dm(x,y) = \dH(x,y)$ for all $x,y \in [2]^n$ with equal weight, we obtain
$$
	\Am (n,d,2) \ge \AH \left(n, 2d, \left\lfloor \frac{n}{2} \right\rfloor, 2 \right).
$$

Our bounds on the guessing number will include binomial coefficients, which can be approximated as follows \cite[Chapter 10]{MS77}. We denote the binary entropy function as 
$$
	H(p) := -p \log_2 p - (1-p) \log_2(1-p)
$$
for $p \in [0,1]$; then 
\begin{gather} 
	\label{eq:binomial}
	\frac{2^{nH(\lambda)}}{\sqrt{8n \lambda (1- \lambda)}}  \le {n \choose \lambda n} \le \frac{2^{nH(\lambda)}}{\sqrt{2 \pi n \lambda (1- \lambda)}},\\
	\label{eq:sum_binomial}
	\frac{2^{nH(\mu)}}{\sqrt{8n \mu (1- \mu)}}  \le \sum_{k=0}^{\mu n} {n \choose k} \le 2^{nH(\mu)},
\end{gather}
provided $\lambda n$ is an integer between $0$ and $n$ and $\mu n$ an integer between $0$ and $n/2$; in particular
\begin{equation} \label{eq:central_binomial}
	{n \choose \lfloor n/2 \rfloor} \ge \frac{2^{n-1}}{\sqrt{2n}}.
\end{equation}

Let us summarise below our remarks on the quantities we have introduced so far in the binary case.

\begin{lemma} \label{lem:AH_Anabla}
We have
$$
	\frac{1}{2 \sqrt{2n}} \AH (n, 2d, 2) \le \AH \left(n, 2d, \left\lfloor \frac{n}{2} \right\rfloor, 2 \right) 
	\le \Am (n,d,2) 	\le \AH(n,2d,2) 
\le \AM(n,d,2) \le d \AH(n,2d-1,2).
$$
\end{lemma}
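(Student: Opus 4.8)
The plan is to verify the five inequalities of the chain one at a time; each has effectively been prepared in the preceding discussion, so the argument amounts to assembling the pieces and checking the direction of every comparison. For the leftmost inequality I would apply the Bassalygo-Elias bound (\ref{eq:Bassalygo}) with minimum distance $2d$ and weight $w=\lfloor n/2\rfloor$, obtaining $\AH(n,2d,\lfloor n/2\rfloor,2)\ge\binom{n}{\lfloor n/2\rfloor}2^{-n}\,\AH(n,2d,2)$, and then bound $\binom{n}{\lfloor n/2\rfloor}2^{-n}\ge 1/(2\sqrt{2n})$ using the central binomial estimate (\ref{eq:central_binomial}). For the second inequality I would take a binary constant-weight code $\mathcal C$ of length $n$, weight $\lfloor n/2\rfloor$ and Hamming minimum distance $2d$ with $|\mathcal C|=\AH(n,2d,\lfloor n/2\rfloor,2)$; since all codewords share the same weight, any two distinct $x,y\in\mathcal C$ have $L(x,y)=L(y,x)$ (equal weights force the number of coordinates on which $x$ falls below $y$ to equal the number on which it rises), hence $\dm(x,y)=L(x,y)=\tfrac12\dH(x,y)\ge d$, so $\mathcal C$ is also a code whose minimum $\dm$-distance is at least $d$, and therefore $\AH(n,2d,\lfloor n/2\rfloor,2)\le\Am(n,d,2)$.

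The two middle inequalities $\Am(n,d,2)\le\AH(n,2d,2)\le\AM(n,d,2)$ are the binary instance of the elementary relation recalled just after the three distances were defined; they follow from the identity $L(x,y)+L(y,x)=\dH(x,y)$ together with $\min\le(\text{average})\le\max$. Indeed a code with minimum $\dm$-distance at least $d$ has every pair of distinct codewords at Hamming distance $L(x,y)+L(y,x)\ge 2d$, and a code with minimum $\dH$-distance at least $2d$ has every such pair at Max-distance $\max\{L(x,y),L(y,x)\}\ge\dH(x,y)/2\ge d$. Finally, the rightmost inequality $\AM(n,d,2)\le d\,\AH(n,2d-1,2)$ is precisely the Borden bound \cite{Bor81} quoted above.

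There is no genuine obstacle in this lemma: it is a compilation of results already stated, and the only steps needing a little care are the equal-weight observation in the second inequality and keeping straight which direction each definitional comparison runs. No new idea is required beyond the bounds already recorded.
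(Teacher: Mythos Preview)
Your proposal is correct and follows exactly the paper's approach: the lemma is explicitly presented there as a summary of the preceding remarks, and you have assembled precisely those pieces---the Bassalygo--Elias bound combined with the central binomial estimate for the first inequality, the equal-weight identity $2\dm=\dH$ for the second, the definitional comparisons $\dm\le\dH/2\le\dM$ for the middle two, and the cited Borden bound for the last. Nothing is missing and nothing differs.
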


\subsection{Bounds on the guessing number for all digraphs} \label{sec:ECC_bounds}

\begin{theorem} \label{th:g<A}
For any signed digraph $D$, and any $f \in F(D,s)$, $\Fix(f)$ is a code of length $n$ with minimum Hamming distance at least $\gamma^+$. Thus,
$$
	g(D,s) \le \log_s \AH(n,\gamma^+,s).
$$
Moreover, for any negative digraph $D^-$, we have
$$
	g(D^-,s) \le \log_s \Am \left( n, \frac{\gamma^+}{2}, s \right);
$$
and for any positive digraph $D^+$, we have
$$
	g(D^+,s) \le \log_s \AM (n, \gamma^+ , s).
$$
\end{theorem}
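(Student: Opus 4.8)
The plan is to prove the stronger statement that, for every $f \in F(D,s)$ and every pair of distinct fixed points $x,y \in \Fix(f)$, the disagreement set $I := \{i \in [n] : x_i \ne y_i\}$ contains (the vertices of) a non-negative cycle of $D$, and moreover that in the unate cases the sign pattern of $x-y$ along that cycle is rigid. The first half immediately gives $\dH(x,y) = |I| \ge \gamma^+$, so $\Fix(f)$ is an $s$-ary code of length $n$ and minimum Hamming distance at least $\gamma^+$, whence $|\Fix(f)| \le \AH(n,\gamma^+,s)$; maximising over $f$ and taking $\log_s$ gives $g(D,s) \le \log_s \AH(n,\gamma^+,s)$. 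The second half will refine this to the min-distance and Max-distance when $D = D^-$ and $D = D^+$.

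To find the cycle I would re-run the argument behind Proposition \ref{prop:G_complete}, now applied to the fixed points $x,y$ rather than to a non-edge of $\G(D,s)$ (the two situations coincide, since $x,y \in \Fix(f)$ forces $xy \notin E(\G(D,s))$). For $i \in I$, Lemma \ref{pro:monotonous} rules out $x \le_i y$ when $x_i > y_i$ and rules out $y \le_i x$ when $y_i > x_i$; expanding the definition of $\le_i$ then yields an in-neighbour $i^* \in N(i)$ which again lies in $I$ and which satisfies, whenever $\lambda(i^*,i) \ne 0$, the relation (\ref{eq:lambda}), i.e.\ $\lambda(i^*,i) = \mathrm{sign}(x_i-y_i)/\mathrm{sign}(x_{i^*}-y_{i^*})$. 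Iterating $i \mapsto i^*$ from an arbitrary vertex of $I$ never leaves $I$ and eventually closes into a simple directed cycle $C = i_0,i_1,\dots,i_{k-1},i_0$ of $D[I]$; if some arc of $C$ has sign $0$ then $C$ is non-negative outright, and otherwise the telescoping product of (\ref{eq:lambda}) along $C$ shows that the sign of $C$ equals $1$. Either way $C$ is non-negative, so $|I| \ge k \ge \gamma^+$.

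For $D = D^-$ every arc of $C$ has sign $-1$, so (\ref{eq:lambda}) gives $\mathrm{sign}(x_{i_{l+1}}-y_{i_{l+1}}) = -\mathrm{sign}(x_{i_l}-y_{i_l})$ around $C$; this forces $k$ even and splits the $k$ distinct vertices of $C$ into $k/2$ with $x_i > y_i$ and $k/2$ with $x_i < y_i$. Hence $L(x,y)$ and $L(y,x)$ are both at least $k/2 \ge \gamma^+/2$, so $\dm(x,y) \ge \gamma^+/2$ and $|\Fix(f)| \le \Am(n,\gamma^+/2,s)$, giving the negative bound after maximising over $f$; note $\gamma^+/2$ is an integer since the non-negative cycles of a negative digraph are exactly its even cycles. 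For $D = D^+$ every arc of $C$ has sign $+1$, so all the $\mathrm{sign}(x_{i_l}-y_{i_l})$ coincide, meaning the whole vertex set of $C$ lies in $\{i : x_i > y_i\}$ or entirely in $\{i : x_i < y_i\}$; thus $\max(L(x,y),L(y,x)) \ge k \ge \gamma^+$, i.e.\ $\dM(x,y) \ge \gamma^+$ and $|\Fix(f)| \le \AM(n,\gamma^+,s)$. (If $D$ has no non-negative cycle then $\gamma^+ = \infty$, $|\Fix(f)| \le 1$ by Proposition \ref{prop:G_complete}, and every bound is trivial.) The step I expect to demand the most care is this bookkeeping in the unate cases: one must ensure the extracted cycle is genuinely simple, so that ``$k/2$ on each side'' and ``all of $C$ on one side'' count distinct coordinates and correctly lower-bound $L(x,y)$ and $L(y,x)$; the rest is a direct transcription of the proof of Proposition \ref{prop:G_complete}.
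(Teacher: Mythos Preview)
Your proposal is correct and follows essentially the same approach as the paper's own proof: both re-run the argument of Proposition~\ref{prop:G_complete} to extract, inside the disagreement set $I$, a non-negative cycle via the map $i \mapsto i^*$ and the sign relation~(\ref{eq:lambda}), and then read off the Hamming, min-, and Max-distance consequences in the general, negative, and positive cases respectively. If anything, your version is more careful than the paper's about insisting the extracted cycle be simple so that the vertex counts genuinely bound $L(x,y)$ and $L(y,x)$.
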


\begin{proof}
The proof is based on the same argument as that of Proposition \ref{prop:G_complete}. If $x$ and $y$ are distinct and not adjacent in the guessing graph $\G(D,s)$, then let $I$ be the set of positions $i$ where $x_i \ne y_i$. For any $i \in I$, we have $x_{N(i)} \ne y_{N(i)}$ hence there exists $j \in I \cap N(i)$. Applying this fact repeatedly, we obtain that $i$ belongs to a cycle in the digraph induced by $I$. This cycle must be non-negative, since otherwise we would have $x_i < y_i$ and $x_i > y_i$. Thus $\dH(x,y) \ge \gamma^+$.

Moreover, if $D^+$ is a positive digraph, we see that if $x_i < y_i$, then $x_k < y_k$ on all the vertices in $I$; thus $L(x,y) \ge \gamma^+$. If instead $x_i > y_i$, then $L(y,x) \ge \gamma^+$; in any case $\dM(x,y) \ge \gamma^+$.

Finally, if $D^-$ is a negative digraph, then in order for the cycle to be non-negative, it must have even length and we see that the sign of $x_k - y_k$ alternates on the cycle; thus $\dm(x,y) \ge \gamma^+/2$.
\end{proof}

%
%
%
%


\begin{corollary}[Sphere-packing bound for the guessing number]
For any signed directed graph $D$, let $t := \frac{1}{n}\left\lfloor \frac{\gamma^+-1}{2} \right\rfloor$, then
\begin{align*}
	g(D,2) &\le n - \log_2 \sum_{s=0}^{nt} {n \choose s}\\
	&\le n  - nH(t) + \frac{1}{2} \log_2 n + \frac{1}{2}  \log_2 (8t(1-t)).
\end{align*}
\end{corollary}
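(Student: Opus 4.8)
The plan is to combine the two bounds already established in Theorem~\ref{th:g<A} with the standard estimates on binomial sums quoted in the excerpt. The statement to be proved is the ``sphere-packing bound for the guessing number'': with $t := \frac{1}{n}\lfloor \frac{\gamma^+-1}{2}\rfloor$, one has
\[
	g(D,2) \le n - \log_2 \sum_{s=0}^{nt} \binom{n}{s}
	\le n - nH(t) + \tfrac{1}{2}\log_2 n + \tfrac{1}{2}\log_2(8t(1-t)).
\]
So the proof naturally splits into two inequalities, and the first is the substantive one while the second is a routine application of an inequality in the text.

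\emph{First inequality.} By Theorem~\ref{th:g<A}, for any signed digraph $D$ we have $g(D,2) \le \log_2 \AH(n,\gamma^+,2)$. Now apply the sphere-packing bound \eqref{eq:sphere-packing}, which says
\[
	\AH(n,\gamma^+,2) \le \frac{2^n}{\sum_{k=0}^{\lfloor (\gamma^+-1)/2\rfloor}\binom{n}{k}}.
\]
Taking $\log_2$ of both sides and noting $\lfloor(\gamma^+-1)/2\rfloor = nt$ by the definition of $t$ gives exactly $g(D,2) \le n - \log_2 \sum_{s=0}^{nt}\binom{n}{s}$. There is a minor point to check: the sphere-packing bound as displayed requires $nt$ to be a genuine integer in the allowed range, but this is automatic since $nt = \lfloor(\gamma^+-1)/2\rfloor$ is an integer and $\gamma^+ \le n$ (as $D$ has $n$ vertices, so any cycle has length at most $n$), so $nt \le (n-1)/2 < n/2$; also if $D$ has no non-negative cycle then $\gamma^+$ is infinite/undefined, in which case $g(D,2)=0$ by Example~\ref{ex:g} and the bound holds trivially, so we may assume $\gamma^+$ is finite.

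\emph{Second inequality.} Here I would simply invoke the lower estimate in \eqref{eq:sum_binomial}, namely $\sum_{k=0}^{\mu n}\binom{n}{k} \ge \frac{2^{nH(\mu)}}{\sqrt{8n\mu(1-\mu)}}$ with $\mu = t$ (valid since $tn$ is an integer between $0$ and $n/2$). Taking $-\log_2$ turns this into
\[
	-\log_2 \sum_{s=0}^{nt}\binom{n}{s} \le -nH(t) + \tfrac{1}{2}\log_2\big(8n\,t(1-t)\big)
	= -nH(t) + \tfrac{1}{2}\log_2 n + \tfrac{1}{2}\log_2\big(8t(1-t)\big),
\]
and adding $n$ to both sides yields the claimed chain. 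One should note the degenerate case $t=0$ (i.e. $\gamma^+ \in\{1,2\}$), where the logarithms are ill-behaved; but then $\sum_{s=0}^{0}\binom{n}{s}=1$ so the first inequality already gives $g(D,2)\le n$, which is trivially true, and the second inequality can be omitted or read as vacuous.

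\emph{Main obstacle.} Frankly there is no serious obstacle: the corollary is a direct substitution of \eqref{eq:sphere-packing} and \eqref{eq:sum_binomial} into Theorem~\ref{th:g<A}. The only things requiring a moment's care are the bookkeeping around $t$ (that $nt$ is an integer in the right range so that the quoted bounds apply) and the handling of the small-$\gamma^+$ degenerate cases; both are dispatched in a line or two as above.
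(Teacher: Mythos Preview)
Your proposal is correct and matches the paper's intended argument: the corollary is stated without proof in the paper precisely because it is an immediate consequence of Theorem~\ref{th:g<A} combined with the sphere-packing bound~\eqref{eq:sphere-packing} and the binomial estimate~\eqref{eq:sum_binomial}, which is exactly what you do. Your attention to the bookkeeping (that $nt=\lfloor(\gamma^+-1)/2\rfloor$ is an integer with $nt<n/2$, and the degenerate cases $t=0$ or $\gamma^+$ undefined) is appropriate and more explicit than the paper itself.
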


\begin{theorem} \label{th:g>A}
Let $D$ be a signed directed graph. We denote
$$
	\phi := \max_{1 \le i \le n} \min \left\{ \frac{n-d_i^0 + 1}{2}, n - d_i^0 - d_i^- + 1, n - d_i^0 - d_i^+ + 1 \right\}.
$$
Any code with minimum min-distance at least $\phi$ is a subset of fixed points of some $f \in F(D,s)$. Thus
$$
	g(D,s) \ge \log_s \Am (n,\phi,s),
$$
and in particular
$$
	g(D,2) \ge \log_2 \Am (n,\phi,2) \ge  \log_2 \AH(n,2 \phi, \lfloor n/2 \rfloor,2).
$$
\end{theorem}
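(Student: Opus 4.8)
The plan is to pass to the guessing graph and invoke Theorem~\ref{th:g=alpha}: it suffices to prove that any code $\mathcal{C}\subseteq[s]^n$ with $\dm(c,c')\ge\phi$ for all distinct $c,c'\in\mathcal{C}$ is an independent set of $\G(D,s)$. Indeed, a non-empty such $\mathcal{C}$ then lies in $\Fix(f)$ for some $f\in F(D,s)$ by Theorem~\ref{th:g=alpha} (and the empty code is contained in every $\Fix(f)$), so taking $\mathcal{C}$ of maximum size gives $g(D,s)\ge\log_s|\mathcal{C}|=\log_s\Am(n,\phi,s)$; throughout, ``min-distance at least $\phi$'' is read as ``at least $\lceil\phi\rceil$'' since $\dm$ is integer-valued.

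Suppose for contradiction that $\G(D,s)$ has an edge $xy$ with $x,y\in\mathcal{C}$. By Proposition~\ref{pro:edges} there is a vertex $i$ with, say, $x\le_i y$ and $x_i>y_i$ (swapping $x$ and $y$ leaves $\dm$ unchanged). I would then bound the three quantities $L(x,y)$, $L(y,x)$ and $\dH(x,y)=L(x,y)+L(y,x)$ using only the definition of $\le_i$. Since $x$ and $y$ agree on $N^0(i)$, every position where they differ lies outside $N^0(i)$, so $\dH(x,y)\le n-d_i^0$. A position $k$ with $x_k<y_k$ lies in neither $N^0(i)$ (equality there) nor $N^-(i)$ (where $x_k\ge y_k$), so $L(x,y)\le n-d_i^0-d_i^-$; symmetrically, a position with $x_k>y_k$ avoids $N^0(i)$ and $N^+(i)$, so $L(y,x)\le n-d_i^0-d_i^+$.

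Since $\dm(x,y)\le L(x,y)$, $\dm(x,y)\le L(y,x)$ and $\dm(x,y)\le\tfrac12\dH(x,y)$, it follows that $\dm(x,y)$ is at most each of $n-d_i^0-d_i^-$, $n-d_i^0-d_i^+$ and $\tfrac{n-d_i^0}{2}$, and each of these is strictly smaller than the corresponding term in the minimum defining $\phi$; hence
\[
	\dm(x,y)\ <\ \min\left\{\,\tfrac{n-d_i^0+1}{2},\ \ n-d_i^0-d_i^-+1,\ \ n-d_i^0-d_i^++1\,\right\}\ \le\ \phi,
\]
the last inequality because $\phi$ is the maximum over all vertices of this very minimum. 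This contradicts $\dm(x,y)\ge\phi$, so $\mathcal{C}$ is independent in $\G(D,s)$, which gives $g(D,s)\ge\log_s\Am(n,\phi,s)$; the binary refinement $\log_2\Am(n,\phi,2)\ge\log_2\AH(n,2\phi,\lfloor n/2\rfloor,2)$ is then immediate from the inequality $\AH(n,2d,\lfloor n/2\rfloor,2)\le\Am(n,d,2)$ recorded in Lemma~\ref{lem:AH_Anabla}. I do not expect a genuine obstacle here: the only point that needs care is recognising that the three terms defining $\phi$ are exactly one more than the natural upper bounds on $\tfrac12\dH(x,y)$, $L(x,y)$ and $L(y,x)$ coming from $x\le_i y$, together with the (harmless) observation that a non-negative loop at $i$ would make such an edge impossible in the first place.
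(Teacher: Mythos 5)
Your proposal is correct and follows essentially the same route as the paper: characterise edges via Proposition~\ref{pro:edges}, bound $\dH(x,y)$, $L(x,y)$ and $L(y,x)$ from $x\le_i y$ by excluding positions in $N^0(i)$, $N^-(i)$ and $N^+(i)$ respectively, conclude $\dm(x,y)<\phi$, and invoke Theorem~\ref{th:g=alpha} together with Lemma~\ref{lem:AH_Anabla} for the binary refinement. The only difference is that you spell out the max/min bookkeeping and the appeal to Theorem~\ref{th:g=alpha}, which the paper leaves implicit.
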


\begin{proof}
Suppose that $xy \in E_i(D,s)$ for some $i$, say $x \le_i y$ and $x_i > y_i$. We must have $x_{N^0(i)} = y_{N^0(i)}$ and hence 
$$
	2 \dm(x,y) \le \dH(x,y) \le n - d_i^0.
$$
We also have $x_j \le y_j$ for all $j \in N^0(i) \cup N^+(i)$, hence 
$$
	\dm(x,y) \le L(y,x) \le n- d_i^0 - d_i^+;
$$
similarly 
$$
	\dm(x,y) \le n - d_i^0 - d_i^-.
$$
The conjunction of these three conditions implies $\dm(x,y) < \phi$. Thus any code with minimum min-distance at least $\phi$ forms an independent set of $\G(D,s)$.
\end{proof}

By combining the Bassalygo-Elias bound and the Gilbert bound, we then obtain another lower bound on $g(D,2)$, which is usually tighter for graphs with high minimum in-degree.

\begin{corollary}[Gilbert bound for the guessing number] \label{cor:g>A}
We have
\begin{align*}
	g(D,2) &\ge \log_2 {n \choose \left\lfloor n/2 \right\rfloor} - \log_2 \sum_{k=0}^{2 \phi - 1} {n \choose k}\\
	&\ge n - n H \left( \min \left\{ \frac{2 \phi - 1}{n}, \frac{1}{2} \right\} \right) - \frac{1}{2} \log_2 n - \frac{3}{2}.
\end{align*}
\end{corollary}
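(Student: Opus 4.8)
The plan is to combine the lower bound from Theorem~\ref{th:g>A}, namely $g(D,2) \ge \log_2 \AH(n, 2\phi, \lfloor n/2\rfloor, 2)$, with the two classical coding-theoretic bounds recalled in Section~\ref{sec:ECC}: the Bassalygo--Elias bound \eqref{eq:Bassalygo} for constant-weight codes and the Gilbert bound \eqref{eq:Gilbert} for the Hamming metric. Everything here is a mechanical composition, and $2\phi$ is an integer (the only non-integral candidate for $\phi$ is $(n-d_i^0+1)/2$, which doubles to an integer), so the quantities are well defined.

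First I would apply the Bassalygo--Elias bound \eqref{eq:Bassalygo} with $d = 2\phi$ and $w = \lfloor n/2\rfloor$:
\[
	\AH\left(n, 2\phi, \lfloor n/2\rfloor, 2\right) \ge \frac{\binom{n}{\lfloor n/2\rfloor}}{2^n}\,\AH(n, 2\phi, 2),
\]
and then the Gilbert bound \eqref{eq:Gilbert} with $d = 2\phi$, i.e. $\AH(n, 2\phi, 2) \ge 2^n / \sum_{k=0}^{2\phi-1}\binom{n}{k}$. The powers of $2^n$ cancel, so
\[
	\AH\left(n, 2\phi, \lfloor n/2\rfloor, 2\right) \ge \frac{\binom{n}{\lfloor n/2\rfloor}}{\sum_{k=0}^{2\phi-1}\binom{n}{k}}.
\]
Taking $\log_2$ and substituting into Theorem~\ref{th:g>A} yields the first displayed inequality of the corollary.

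For the second inequality I would plug in the standard binomial estimates. The central binomial bound \eqref{eq:central_binomial} gives $\log_2\binom{n}{\lfloor n/2\rfloor} \ge (n-1) - \tfrac12\log_2(2n) = n - \tfrac32 - \tfrac12\log_2 n$. For the denominator, the partial-sum estimate \eqref{eq:sum_binomial}, $\sum_{k=0}^{\mu n}\binom{n}{k} \le 2^{nH(\mu)}$, applies directly with $\mu = (2\phi-1)/n$ whenever $2\phi - 1 \le n/2$; when instead $2\phi - 1 > n/2$, I would use the trivial bound $\sum_{k=0}^{2\phi-1}\binom{n}{k} \le 2^n = 2^{nH(1/2)}$. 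Both cases are unified as $\log_2\sum_{k=0}^{2\phi-1}\binom{n}{k} \le n H\big(\min\{(2\phi-1)/n, 1/2\}\big)$, and subtracting gives exactly the claimed bound.

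There is no genuine obstacle: the only point requiring a small amount of care is the case distinction on whether $2\phi - 1$ exceeds $n/2$, which is precisely why the $\min\{\,\cdot\,, 1/2\}$ appears inside the entropy function; once that is noted, the result follows by arithmetic from the cited inequalities.
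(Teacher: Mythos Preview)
Your proof is correct and follows exactly the approach the paper intends: the paper simply states that the corollary is obtained ``by combining the Bassalygo--Elias bound and the Gilbert bound'' with Theorem~\ref{th:g>A}, and you have carried this out in detail, including the case split for the entropy estimate that justifies the $\min\{\cdot,1/2\}$.
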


We finish this section with an open question about the guessing number. For unsigned digraphs, the limit of the guessing number always exists (we shall say more about it later); however, this remains open in the case of signed digraphs.

\begin{question} \label{q:g}
Does $\lim_{s \to \infty} g(D,s)$ exist for any signed digraph $D$?
\end{question}

\section{Functions defined over signed cliques} \label{sec:positive}

\subsection{Refined bounds for positive or negative functions} \label{sec:bounds_positive}

We are now interested in fixed points of functions whose signed interaction graphs are fully and equally signed, i.e. either all arcs are signed positively (positive function) or negatively (negative function). In view of the remarks above, we only consider digraphs without loops; therefore we are interested in the guessing numbers of $K_n^+$, the positive clique on $n$ vertices, and of $K_n^-$, the negative clique on $n$ vertices. First of all, their respective guessing graphs can be easily determined.

\begin{lemma} \label{lem:G(Kn)}
\begin{enumerate}	
	\item For $K_n^-$, we have
	$$
		E(\G(K_n^-,s)) = \{xy : x \le y \text{ or } y \le x\} = \{xy: \dm(x,y) = 0\}.
	$$
	
	\item For $K_n^+$, we have
	$$
		E(\G(K_n^+,s)) = \{xy : L(x,y) = 1 \text{ or } L(y,x) = 1\}.
	$$
\end{enumerate}
\end{lemma}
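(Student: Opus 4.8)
The plan is to compute both edge sets directly from Proposition~\ref{pro:edges}, specialising the order relations $\le_i$ to the two cliques. Recall that $E(\G(D,s))=\bigcup_{i\in[n]}E_i(D,s)$, so it suffices to describe each $E_i$ and take the union; throughout, edges join \emph{distinct} vertices, so I may freely assume $x\ne y$.

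\textbf{The negative clique.} Since $K_n^-$ has no loops, $N(i)=[n]\setminus\{i\}$ with every arc signed $-1$, so $N^-(i)=[n]\setminus\{i\}$ and $N^0(i)=N^+(i)=\emptyset$. Hence $x\le_i y$ means exactly $x_j\ge y_j$ for all $j\ne i$. Plugging this into the definition of $E_i(K_n^-,s)$, the first alternative ``$x\le_i y$ and $x_i>y_i$'' becomes ``$x_j\ge y_j$ for all $j\ne i$ and $x_i>y_i$'', i.e.\ $x\ge y$ with strict inequality at coordinate $i$; the second alternative is the symmetric statement with $x$ and $y$ interchanged. Taking the union over $i$, I get $\bigcup_i E_i(K_n^-,s)=\{xy:x\ge y\}\cup\{xy:y\ge x\}$: if $x\ge y$ and $x\ne y$ then $x_k>y_k$ for some $k$, so $xy\in E_k$, and the reverse inclusion is immediate. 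Rewriting ``$x\ge y$'' as ``$y\le x$'' gives the first displayed equality. For the second, note that $\dm(x,y)=\min\{L(x,y),L(y,x)\}$ vanishes exactly when $L(x,y)=0$ or $L(y,x)=0$, i.e.\ when $x\ge y$ or $y\ge x$, which is precisely the condition just obtained.

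\textbf{The positive clique.} Here every arc is signed $+1$, so $N^+(i)=[n]\setminus\{i\}$ and $N^0(i)=N^-(i)=\emptyset$, whence $x\le_i y$ means $x_j\le y_j$ for all $j\ne i$. The first alternative in $E_i(K_n^+,s)$ becomes ``$x_j\le y_j$ for all $j\ne i$ and $x_i>y_i$'', which says $\{j:x_j>y_j\}=\{i\}$, equivalently $L(y,x)=1$ with the unique witnessing coordinate equal to $i$; the second alternative says $L(x,y)=1$ with witnessing coordinate $i$. Taking the union over $i$ then yields $\{xy:L(y,x)=1\}\cup\{xy:L(x,y)=1\}$, since a pair with $L(y,x)=1$ lies in $E_k$ for the unique $k$ with $x_k>y_k$ (and symmetrically), and the reverse inclusion is clear. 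This is the claimed description.

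\textbf{Main obstacle.} There is essentially no obstacle here: the only points requiring a little care are the ``converse'' inclusions (checking that each candidate pair does land in some $E_i$ for a suitably chosen $i$) and translating the partial orders $\le_i$ correctly---in particular remembering that negative in-neighbours reverse the inequality---but both are routine once the in-neighbourhoods are identified.
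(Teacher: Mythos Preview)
Your proof is correct and follows essentially the same approach as the paper: both arguments specialise the edge description from Proposition~\ref{pro:edges} by computing $\le_i$ for each clique and then taking the union over $i$. Your write-up is simply more explicit about the converse inclusions and the translation to $\dm$ and $L$, while the paper's version compresses each case to a single sentence.
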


\begin{proof}
For $K_n^-$, $xy$ is an edge in the guessing graph if and only if there exists $i$ such that $x_i < y_i$ and $x_j \le y_j$ for all $j \ne i$ (or vice versa), which is equivalent to $x \le y$ (or $y \le x$).

For $K_n^+$, $xy$ is an edge in the guessing graph if and only if there exists $i$ such that $x_i < y_i$ and $x_j \ge y_j$ for all $j \ne i$ (or vice versa), which is equivalent to $L(x,y) = 1$ (or $L(y,x) = 1$).
\end{proof}

\begin{proposition} \label{prop:Kn-}
For $K_n^-$, we have
$$
	g(K_n^-,s) = \log_s \left | B\left(n, \left\lfloor \frac{n(s-1)}{2} \right\rfloor, s \right) \right|, 
$$
and in particular
\begin{align*}
	g(K_n^-,2) &= \log_2 {n \choose \left\lfloor \frac{n}{2} \right\rfloor}
	\ge n - \frac{1}{2}\log_2 n - \frac{3}{2}\\
	\lim_{s \to \infty} g(K_n^-,s) &= \sup_{s \ge 2} g(K_n^-,s) = n-1 =k^+.
\end{align*}
\end{proposition}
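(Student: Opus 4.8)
The plan is to compute $\alpha(\G(K_n^-,s))$ directly from the description of the edge set given in Lemma \ref{lem:G(Kn)}(1), and then invoke Theorem \ref{th:g=alpha}. Since $E(\G(K_n^-,s)) = \{xy : \dm(x,y)=0\}$, an independent set is precisely a set $Z \subseteq [s]^n$ that forms an \emph{antichain} under the coordinatewise partial order $\le$ on $[s]^n$. So the problem reduces to finding the largest antichain in the product poset $[s]^n = \{0,\dots,s-1\}^n$. This is a classical fact: by Dilworth/Mirsky-type results (the product of chains is a ranked, graded poset with the LYM/normalized-matching property, hence it has the Sperner property), the largest antichain is a maximum-size \emph{rank level}, i.e. the set of vectors of a fixed weight $w$, and the size is maximized by taking $w$ as close as possible to the middle rank $\frac{n(s-1)}{2}$. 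Concretely the maximum antichain has size $\bigl|B(n,\lfloor n(s-1)/2\rfloor, s)\bigr|$ (the middle layer, which by symmetry of $W$ under $x_i \mapsto (s-1)-x_i$ equals the $\lceil n(s-1)/2\rceil$ layer). Combining with Theorem \ref{th:g=alpha} gives the displayed formula for $g(K_n^-,s)$.

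For the binary case $s=2$, the poset is the Boolean lattice $2^{[n]}$ and Sperner's theorem gives $\alpha = \binom{n}{\lfloor n/2\rfloor}$; the stated inequality $\binom{n}{\lfloor n/2\rfloor} \ge n - \tfrac12\log_2 n - \tfrac32$ as a bound on $\log_2$ of that quantity follows immediately from the central binomial estimate \eqref{eq:central_binomial}, namely $\binom{n}{\lfloor n/2\rfloor}\ge 2^{n-1}/\sqrt{2n}$, taking $\log_2$. For the limit statement: on one hand $g(K_n^-,s)\le k^+ = n-1$ by the general upper bound \eqref{eq:old_bounds}, since $K_n^-$ has a non-negative feedback vertex set of size $n-1$ (the negative digons on the remaining pair form a non-negative $2$-cycle only when paired, so removing all but one vertex destroys all non-negative cycles; more carefully, $k^+(K_n^-)=n-1$ should be checked directly). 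On the other hand, I would show $g(K_n^-,s) \to n-1$ by estimating $\log_s|B(n,\lfloor n(s-1)/2\rfloor,s)|$: the total $s^n$ is spread over $n(s-1)+1$ weight classes, and the middle class — by the usual local central limit / Gaussian approximation for the sum of $n$ i.i.d. uniform variables on $\{0,\dots,s-1\}$ — has size $\Theta\!\bigl(s^n / \sqrt{n}\bigr)$ with the implied constant depending only on $n$ and $s$ through a factor bounded between constants; thus $\log_s|B(\cdot)| = n - \log_s(\text{poly factor})$, and since the polynomial factor is $O_n(\sqrt{\cdot})$ independent of growth in $s$ beyond a bounded constant, $\log_s$ of it tends to... actually the cleanest route is: $|B(n, \lfloor n(s-1)/2\rfloor, s)| \ge s^n/(n(s-1)+1)$ always (pigeonhole on the $n(s-1)+1$ weight classes), so $g(K_n^-,s)\ge n - \log_s(n(s-1)+1) \to n-1$ as $s\to\infty$; combined with the upper bound $n-1$ and monotonicity considerations this pins down both the limit and the supremum.

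The main obstacle I anticipate is the Sperner-type step: proving that the maximum antichain in $[s]^n$ is a middle weight-layer. For $s=2$ this is Sperner's theorem; for general $s$ it is the statement that the product of $n$ copies of an $s$-element chain is Sperner, which is true (it follows from the normalized matching property of such products, a result going back to de Bruijn–Tengbergen–Kruyswijk for chains and generally to the theory of symmetric chain decompositions — the product of chains has a symmetric chain decomposition). I would cite this rather than reprove it, but I should make sure the paper is comfortable invoking it; if not, the symmetric-chain-decomposition argument (each symmetric chain meets the middle layer(s) in exactly one element, so any antichain has size at most the number of symmetric chains $=$ the middle layer size) is short enough to sketch. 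The remaining pieces — the binomial estimates, the pigeonhole lower bound, and the identification $k^+(K_n^-)=n-1$ — are routine. I would carry the steps out in this order: (i) reduce to max antichain via Lemma \ref{lem:G(Kn)}(1) and Theorem \ref{th:g=alpha}; (ii) invoke the Sperner property of $[s]^n$ to get the exact value; (iii) specialize to $s=2$ and apply \eqref{eq:central_binomial}; (iv) establish $k^+=n-1$ for the upper bound and use pigeonhole for the matching lower bound to get the limit and supremum.
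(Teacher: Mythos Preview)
Your proposal is correct and follows essentially the same route as the paper: reduce to the maximum antichain in $[s]^n$ via Lemma~\ref{lem:G(Kn)}(1) and Theorem~\ref{th:g=alpha}, invoke the de~Bruijn--Tengbergen--Kruyswijk generalisation of Sperner's theorem to identify the middle layer as optimal, apply \eqref{eq:central_binomial} for the binary estimate, and combine the pigeonhole bound $|B(n,\lfloor n(s-1)/2\rfloor,s)|\ge s^n/(n(s-1)+1)$ with $k^+=n-1$ for the limit. The only cosmetic difference is that the paper phrases the poset as the divisor lattice of $(p_1\cdots p_n)^{s-1}$ before citing \cite{DVK51}, whereas you work directly with the product of chains; the content is identical.
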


\begin{proof}
A set of fixed points of $f\in F(K_n^-,s)$ is a set of incomparable states, i.e. an antichain according to the partial order $\le$. This partial order is isomorphic to the lattice of divisors of $N = (p_1 p_2 \dots p_n)^{s-1}$, where $p_i$ is the $i$-th prime number. Thus, the largest antichain is $B(n, n(s-1)/2, s)$, a result called Sperner's theorem for $s=2$ and then extended to any $s$ in \cite{DVK51}. The bound on the binomial coefficient then follows from \eqref{eq:central_binomial}. For the limit, we observe that $W(x)$ takes a value between $0$ and $n(s-1)$, hence
\begin{align*}
	\max_{0 \le w \le n(s-1)} |B(n,w,s)| &\ge \frac{s^n}{n(s-1) + 1}\\
	\log_s \max_w |B(n,w,s)| &\ge n - 1 - o(1),
\end{align*}
while $g(K_n^-,s) \le k^+= n-1$.
\end{proof}

%

%

Although the guessing graph of $K_n^+$ is easy to determine, its independence number (and hence the guessing number of $K_n^+$) is still unknown. The bounds on the guessing number reviewed in \eqref{eq:old_bounds} yield 
\begin{equation}\label{eq:classicbounds}
	\left\lfloor \frac{n}{2} \right\rfloor \le g(K_n^+,s) \le n-1.
\end{equation}
We shall significantly improve on those bounds. Firstly, we consider the binary case.

\begin{proposition} \label{prop:Kn+2}
The binary guessing number of $K_n^+$ satisfies
$$
		n - \frac{3}{2}\log_2 n - \frac{3}{2} \le g(K_n^+,2) \le n - \log_2 (n+2) + 1.
$$
\end{proposition}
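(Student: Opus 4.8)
The plan is to bound $\alpha(\G(K_n^+,2))$ from both sides using Lemma~\ref{lem:G(Kn)}, which tells us that $xy$ is an edge of $\G(K_n^+,2)$ exactly when $x$ and $y$ differ in a single ``direction,'' i.e. $L(x,y)=1$ or $L(y,x)=1$. For the upper bound I would first invoke Theorem~\ref{th:g<A}: for $K_n^+$ one has $\gamma^+ = n$ (the only cycles are the whole thing, since it is a clique without shorter non-negative cycles among all pairs\,--\,more carefully, any directed cycle through a subset $S$ has length $|S|$ and is positive, so $\gamma^+ = 2$ actually; I need to recompute $\gamma^+$ for $K_n^+$ before using the theorem). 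Once $\gamma^+$ is pinned down, the cleaner route to the upper bound is the Max-distance statement $g(K_n^+,2) \le \log_2 \AM(n,\gamma^+,2)$ combined with the Varshamov bound \eqref{eq:Varshamov}. Plugging $d=\gamma^+$ into \eqref{eq:Varshamov} and using the binomial estimates \eqref{eq:binomial}\,--\,\eqref{eq:central_binomial} should produce the claimed $n - \log_2(n+2) + 1$. Alternatively, a direct counting argument: an independent set $Z$ in $\G(K_n^+,2)$ can contain at most one state from each ``chain in a single coordinate direction'' issuing from a fixed weight class, and a careful accounting of how many such forced exclusions a weight-$w$ codeword imposes yields the $\log_2(n+2)$ loss.

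For the lower bound I would use Theorem~\ref{th:g>A} applied to $D = K_n^+$: here $d_i^0 = 0$ and $d_i^+ = n-1$, $d_i^- = 0$ for every $i$, so
\[
	\phi = \max_i \min\left\{ \frac{n+1}{2},\; n+1,\; n - (n-1) + 1 \right\} = \min\left\{ \frac{n+1}{2},\, 2 \right\}.
\]
That gives $\phi$ a constant, which is too weak, so instead I would go back to the guessing-graph description directly. From Lemma~\ref{lem:G(Kn)}, restricting attention to the middle weight layer $B(n,\lfloor n/2\rfloor,2)$: two distinct codewords $x,y$ of equal weight always satisfy $L(x,y) = L(y,x) = \dm(x,y) \ge 1$, but the edge condition requires one of these to equal exactly $1$. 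So within the middle layer, $xy$ is an edge iff $\dm(x,y) = 1$, i.e. iff $\dH(x,y) = 2$. Hence an independent set inside the middle layer is exactly a constant-weight binary code of length $n$, weight $\lfloor n/2 \rfloor$, minimum Hamming distance $\ge 4$. Then $\alpha(\G(K_n^+,2)) \ge \AH(n, 4, \lfloor n/2\rfloor, 2)$, and applying the Bassalygo-Elias bound \eqref{eq:Bassalygo} with $d=4$ together with the Gilbert bound \eqref{eq:Gilbert} and the binomial estimates gives
\[
	g(K_n^+,2) \ge \log_2 \AH(n,4,\lfloor n/2\rfloor,2) \ge n - \tfrac{3}{2}\log_2 n - \tfrac{3}{2},
\]
matching the claimed lower bound (the $-\tfrac32\log_2 n$ comes from one $-\tfrac12\log_2 n$ in \eqref{eq:central_binomial} and a further $-\log_2 n$-type loss from $\sum_{k=0}^{3}\binom{n}{k} = \Theta(n^3)$, giving $-3\log_2 n$ before the Bassalygo normalization returns part of it\,--\,I will have to track the constants carefully here).

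The main obstacle I anticipate is getting the constants exactly right on both ends so that the two bounds sandwich $g(K_n^+,2)$ in the stated form rather than merely up to additive $O(\log\log n)$ or off-by-one terms; in particular, the Bassalygo-Elias plus Gilbert composition for $d=4$ constant-weight codes needs the $\sum_{k=0}^{3}\binom{n}{k} \le \frac{4}{3}\binom{n}{3} \le n^3/4$ type estimate handled precisely, and on the upper side I need the Varshamov bound \eqref{eq:Varshamov} with $d = \gamma^+$ — which forces me to first correctly identify $\gamma^+ = 2$ for $K_n^+$ and then check that $\AM(n,2,2)$ (or whatever the correct parameter is) genuinely has the form $2^n/(n+2) \cdot 2$. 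A secondary subtlety is that Theorem~\ref{th:g<A}'s Max-distance refinement is stated for $D^+$ with its own $\gamma^+$, and I should double-check that the clique's girth structure makes that refinement nontrivial rather than vacuous. If the Varshamov route proves awkward, the fallback is the direct double-counting of edges in $\G(K_n^+,2)$ followed by a Turán-type or Hansel-lemma-style argument tailored to the single-coordinate edge structure.
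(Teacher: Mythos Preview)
Your overall architecture is right, and the upper bound goes through exactly as the paper does: $K_n^+$ has $\gamma^+ = 2$ (any pair of vertices forms a positive $2$-cycle), so Theorem~\ref{th:g<A} gives $g(K_n^+,2)\le \log_2 \AM(n,2,2)$, and the Varshamov bound \eqref{eq:Varshamov} with $d=2$ yields $\AM(n,2,2)\le 2^{n+1}/(2+\lfloor n/2\rfloor+\lceil n/2\rceil)=2^{n+1}/(n+2)$, hence the claimed $n-\log_2(n+2)+1$. Your hesitation about $\gamma^+$ is unnecessary once you observe the $2$-cycles.

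The lower bound, however, has a genuine quantitative gap. First a minor point: your dismissal of Theorem~\ref{th:g>A} is unwarranted. You correctly compute $\phi=2$, and that is not ``too weak'': plugging $\phi=2$ into the theorem gives exactly $g(K_n^+,2)\ge \log_2 \AH(n,4,\lfloor n/2\rfloor,2)$, the same inequality your direct guessing-graph argument produces. The real problem is your estimate of $\AH(n,4,\lfloor n/2\rfloor,2)$. Combining Bassalygo--Elias \eqref{eq:Bassalygo} with Gilbert \eqref{eq:Gilbert} yields
\[
\AH(n,4,\lfloor n/2\rfloor,2)\;\ge\;\frac{\binom{n}{\lfloor n/2\rfloor}}{\sum_{k=0}^{3}\binom{n}{k}},
\]
and since $\sum_{k=0}^{3}\binom{n}{k}=\Theta(n^3)$, taking logarithms and using \eqref{eq:central_binomial} gives only $n-\tfrac{7}{2}\log_2 n+O(1)$, not $n-\tfrac{3}{2}\log_2 n-\tfrac{3}{2}$. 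There is no ``return'' from the Bassalygo normalization: the factor $\binom{n}{\lfloor n/2\rfloor}/2^n$ costs $\tfrac12\log_2 n$ and the Gilbert denominator costs a further $3\log_2 n$, for a total loss of $\tfrac72\log_2 n$.

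What the paper actually uses is the Graham--Sloane construction \cite{GS80}, which for minimum distance $4$ gives the much sharper
\[
\AH(n,4,w,2)\;\ge\;\frac{1}{n}\binom{n}{w}
\]
(partition the weight-$w$ words into $n$ classes according to $\sum_i i\,x_i \bmod n$; each class has distance $\ge 4$). With $w=\lfloor n/2\rfloor$ and \eqref{eq:central_binomial} this gives $\log_2\binom{n}{\lfloor n/2\rfloor}-\log_2 n\ge n-\tfrac{3}{2}\log_2 n-\tfrac{3}{2}$, matching the stated constant. So your route reaches the correct intermediate quantity $\AH(n,4,\lfloor n/2\rfloor,2)$, but you need this explicit constant-weight construction rather than the generic Bassalygo--Elias plus Gilbert combination.
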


\begin{proof}
For the lower bound, Theorem \ref{th:g>A} yields
$$
	g(K_n^+,2) \ge \log_2 \AH(n,4,\left \lfloor \frac{n}{2} \right\rfloor, 2)
	\ge \log_2 {n \choose \left \lfloor \frac{n}{2} \right\rfloor} - \log_2 n
	\ge n - \frac{3}{2}\log_2 n - \frac{3}{2},
$$

where the second inequality comes from the construction in \cite[Theorem 1]{GS80}, which we shall adapt in the proof of Proposition \ref{prop:Kn+s}. For the upper bound, Theorem \ref{th:g<A} together with the Varshamov bound in \eqref{eq:Varshamov} yield
$$
	g(K_n^+,2) \le \log_2 \AM(n,2,2) \le \log_2 \frac{2^{n+1}}{n+2}.
$$
\end{proof}

%
%

We now investigate general alphabets.

\begin{proposition} \label{prop:Kn+s}
For $K_n^+$, we have
\begin{align*}
	g(K_2^+,s) &= 1 \quad \forall s \ge 2,\\
	g(K_3^+,s) &= \log_s \left( \left\lfloor \frac{3(s-1)}{2} \right\rfloor + 1 \right) \quad \forall s \ge 2\\
	\lim_{s \to \infty} g(K_3^+,s) &= \inf_{s \ge 2} g(K_3^+,s) = 1,\\
	\lim_{s \to \infty} g(K_4^+,s) &= 2,
\end{align*}
and for all $n \ge 4$,
$$
	n-3 \le \limsup_{s \to \infty} g(K_n^+,s) \le n-2.
$$
\end{proposition}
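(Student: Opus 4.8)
The plan is to handle each item by combining the guessing-graph description of $K_n^+$ from Lemma~\ref{lem:G(Kn)} with the coding-theoretic bounds of Theorems~\ref{th:g<A} and~\ref{th:g>A}, specialised to small $n$ and then pushed asymptotically in $s$. Recall $g(K_n^+,s)=\log_s\alpha(\G(K_n^+,s))$ with $E(\G(K_n^+,s))=\{xy: L(x,y)=1 \text{ or } L(y,x)=1\}$; thus an independent set is exactly a code $\mathcal C\subseteq[s]^n$ such that $L(x,y)\ge 2$ and $L(y,x)\ge 2$ for all distinct $x,y$, i.e. a code of minimum min-distance $\ge 2$ (equivalently of Max-distance and min-distance both $\ge 2$). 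For $n=2$: if $\dm(x,y)\ge 2$ with $x\ne y$ and $n=2$, then $L(x,y)\ge 2$ forces $x_0<y_0$ and $x_1<y_1$, so $L(y,x)=0<2$, a contradiction; hence no two codewords can coexist unless they are comparable in only one coordinate — more carefully, the independent sets are exactly chains in the product order, whose maximum size is $s$, giving $g(K_2^+,s)=\log_s s=1$. For $n=3$: I will show $\alpha(\G(K_3^+,s))=\lfloor 3(s-1)/2\rfloor+1$. The upper bound follows because $\gamma^+(K_3^+)=3$, so Theorem~\ref{th:g<A} gives $g(K_3^+,s)\le\log_s\AM(3,3,s)$, and one checks directly that a Max-distance-$3$ code in $[s]^3$ is a set of states whose coordinate-sorted profiles are "spread out"; the extremal such code is a set of weights forming an arithmetic-like progression of step roughly $3/2$, of size $\lfloor 3(s-1)/2\rfloor+1$. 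The matching lower bound comes from exhibiting this code explicitly (take codewords indexed by their weight $w$, spaced so that no two differ by $L$-value $1$ in either direction). The two limit/infimum statements for $n=3,4$ then follow: $\log_s(\lfloor 3(s-1)/2\rfloor+1)\to 1$ as $s\to\infty$ since the argument grows linearly in $s$, and it is minimised (and the infimum equals the limit) because the function is decreasing in $s$ past $s=2$ — verified by a short monotonicity check on $\log_s(cs)$-type expressions.

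For $n=4$ I would prove $\lim_{s\to\infty}g(K_4^+,s)=2$. The upper bound $\le 2$ (in the limit) should come from Theorem~\ref{th:g<A}: $g(K_4^+,s)\le\log_s\AM(4,\gamma^+,s)=\log_s\AM(4,4,s)$, and a Max-distance-$4$ code in $[4]$-coordinates of length $4$ has size $O(s^2)$ — essentially because fixing the two "halves" of the sorted profile pins the code up to lower-order choices; this is the same phenomenon that makes the Singleton-type bound $s^{n-d+1}$ relevant, here giving $s^{4-4+1}=s$ from Singleton but a sharper $s^2$ bound from the asymmetric structure, which needs a dedicated combinatorial argument. The lower bound $\ge 2$ in the limit comes from Theorem~\ref{th:g>A}: compute $\phi$ for $K_4^+$ (here $d_i^0=0$, $d_i^-=0$, $d_i^+=3$, so $\phi=\min\{(4+1)/2,4-0-0+1,4-0-3+1\}=\min\{2.5,5,2\}=2$), so $g(K_4^+,s)\ge\log_s\Am(4,2,s)$, and I need $\Am(4,2,s)\ge(1-o(1))s^2$, which I would get by the same sort of explicit construction used for the general lower bound below, or by a volume/Gilbert-type count in the min-distance metric.

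For the general bound $n-3\le\limsup_{s\to\infty}g(K_n^+,s)\le n-2$ with $n\ge4$: the upper bound uses Theorem~\ref{th:g<A} once more, $g(K_n^+,s)\le\log_s\AM(n,n,s)$, and an asymmetric Singleton-type argument showing $\AM(n,n,s)\le\cst\cdot s^{n-2}$ — the intuition being that in a Max-distance-$n$ code, specifying the values in two well-chosen coordinates of each codeword's sorted form determines the rest up to a bounded factor, so the exponent drops by $2$ from the trivial $s^n$ rather than by $1$ as ordinary Singleton would give. The lower bound $n-3$ needs an explicit family of min-distance-$2$-ish codes: I would adapt the Graham–Sloane construction cited in the proof of Proposition~\ref{prop:Kn+2} (and promised to be adapted in Proposition~\ref{prop:Kn+s}) to non-binary alphabets, producing a code in $[s]^n$ of size $\gtrsim s^{n-3}$ whose min-distance meets the threshold $\phi$ forced by Theorem~\ref{th:g>A}; one checks $\phi\le 2$ for $K_n^+$ since $n-d_i^+ +1 = n-(n-1)+1=2$, so min-distance $\ge 2$ suffices, and a construction of such a code of rate $n-3$ is what is required. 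The main obstacle is pinning down $\AM(n,n,s)$ to the right exponent $n-2$: the ordinary Singleton bound only gives exponent $n-1$, and squeezing out the extra factor of $s$ requires a genuinely asymmetric argument (looking at both $L(x,y)$ and $L(y,x)$ simultaneously and exploiting that Max-distance $n$ on a length-$n$ code forces, for each pair, essentially total domination in one direction), together with a matching construction of rate exactly $n-3$ on the lower side — closing the gap between $n-3$ and $n-2$ is explicitly left open, so the real work is establishing the two endpoints rigorously rather than the (likely unattainable) exact value.
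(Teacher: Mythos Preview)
Your proposal has two related errors that undermine all of the upper-bound arguments.

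First, your characterisation of independent sets in $\G(K_n^+,s)$ is wrong. Lemma~\ref{lem:G(Kn)} says $xy$ is an edge iff $L(x,y)=1$ or $L(y,x)=1$; hence non-adjacency means $L(x,y)\neq 1$ \emph{and} $L(y,x)\neq 1$, i.e.\ each of these is either $0$ or $\ge 2$. It does \emph{not} mean both are $\ge 2$, and it is not equivalent to $\dm(x,y)\ge 2$. For instance $(0,\dots,0)$ and $(s-1,\dots,s-1)$ are non-adjacent although their min-distance is $0$. You tacitly notice this for $n=2$ (``independent sets are exactly chains''), but then revert to the false description for $n\ge 3$.

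Second, and relatedly, $\gamma^+(K_n^+)=2$ for all $n\ge 2$, since every pair of vertices forms a positive $2$-cycle. So Theorem~\ref{th:g<A} only yields $g(K_n^+,s)\le\log_s\AM(n,2,s)$, which is far too weak; your invocations of $\AM(3,3,s)$, $\AM(4,4,s)$ and $\AM(n,n,s)$ are not consequences of that theorem, and there is no direct route from the guessing-graph structure to a Max-distance-$n$ condition.

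The paper obtains the upper bounds by a different, simple idea you are missing: slice the independent set $\mathcal C$ by weight. If $x\ne y$ lie in the same weight class, then both $L(x,y)\ge 1$ and $L(y,x)\ge 1$, so non-adjacency forces both $\ge 2$ and hence $\dH(x,y)\ge 4$; the Singleton bound gives $|\mathcal C_w|\le s^{n-3}$, and summing over the $n(s-1)+1$ weight classes yields $\limsup g(K_n^+,s)\le n-2$. The $K_3^+$ upper bound is even more direct: if $W(x)\le W(y)$ and $x,y$ are non-adjacent, one checks $x\le y$ coordinatewise with $W(y)\ge W(x)+2$, so any independent set is a chain with weight gaps $\ge 2$, of size at most $\lfloor 3(s-1)/2\rfloor+1$. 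Your lower-bound plan (via $\phi=2$ and a min-distance-$2$ code of size $\sim s^{n-3}$) is on the right track; the paper realises it with the explicit ``moment'' code $\{x:\sum x_i=m_0,\ \sum ix_i=m_1,\ \sum i^2x_i=m_2\}$ and a short argument that this code has $\dm\ge 2$.
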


\begin{proof}
Foremost, we have $g(K_n^+,s) \le n-1$ for any $n$ and $s$. Since $K_2^+$ is a positive cycle, we have a guessing number of 1.

For $K_3^+$, suppose $x$ and $y$ are not adjacent in the guessing graph and let $W(x) \le W(y)$. Then we claim that $x \le y$ and $W(y) \ge W(x) + 2$. Indeed, denote the coordinates as $i$, $j$, and $k$: since $y$ has higher weight, we have $x_i < y_i$ and since they are not adjacent, $x_j < y_j$; by non-adjacency we must then have $x_k \le y_k$. Thus, any independent set in the guessing graph is a chain of length at most $\lfloor 3(s-1)/2 \rfloor + 1$.

Conversely, construct the following infinite chain recursively. Let $c^0 = (0,0,0)$ and for any $k \ge 0$,
$$
	c^{k+1} = c^k + \begin{cases}
	(1,1,0) &\text{if } k \equiv 0 \mod 3\\
	(0,1,1) &\text{if } k \equiv 1 \mod 3\\
	(1,0,1) &\text{if } k \equiv 2 \mod 3.
	\end{cases}
$$
The sequence starts
$$
	c^0 = (0,0,0), \, c^1 = (1,1,0), \, c^2 = (1,2,1), \, c^3 = (2,2,2), \, c^4 = (3,3,2) \dots
$$
It is easy to check that the first $\lfloor 3(s-1)/2 \rfloor + 1$ terms in the sequence belong to $[s]^3$ and that for any $k < l$, $L(c^k, c^l) \ge 2$ and $L(c^l, c^k) = 0$. Therefore, these terms form an independent set in the guessing graph $\G(K_3^+,s)$.

For $K_4^+$, the lower bound in \eqref{eq:classicbounds} yields $g(K_4^+,s) \ge 2$ for all $s \ge 2$; the limit follows from the upper bound on $\limsup_{s \to \infty} g(K_n^+,s)$, which we now prove. Let $\mathcal{C}$ be the largest set of fixed points of a network on $K_n^+$ ($n \ge 4$), and let $\mathcal{C}_w$ be the set of codewords in $\mathcal{C}$ with weight $w$. If $x,y \in \mathcal{C}_w$ are distinct, we have $L(x,y) > 0$  since they have equal weight and hence $L(x,y) \ge 2$, since they belong to $\mathcal{C}$. Similarly, $L(y,x) \ge 2$ which yields $\dH(x,y) \ge 4$. By the Singleton bound, $|\mathcal{C}_w| \le s^{n-3}$ and hence
$$
	g(K_n^+,s) = \log_s |\mathcal{C}| \le \log_s \left\{ (n(s-1) + 1) s^{n-3} \right\} = n-2 + o(1).
$$

We now prove the lower bound on the limit of the guessing number of $K_n^+$. For any $m = (m_0,m_1,m_2) \in \mathbb{N}^3$, define the code $\mathcal{C}_m \subseteq [s]^n$ as
$$
	\mathcal{C}_m = \{x \in [s]^n : W_0(x) = m_0, W_1(x) = m_1, W_2(x) = m_2\},
$$
where
\begin{align*}
	W_0(x) &:= \sum_{i=1}^n x_i = W(x),\\
	W_1(x) &:= \sum_{i=1}^n i x_i,\\
	W_2(x) &:= \sum_{i=1}^n i^2 x_i.
\end{align*}
We claim that for any $x,y \in \mathcal{C}_m$, $\dm(x,y) \ge 2$ (and in particular, they are not adjacent in the guessing graph). First, since $W(x) = W(y)$, we have $\dm(x,y) \ge 1$. Suppose then that $L(x,y) = 1$, i.e. $y_j > x_j$ for a unique position $j$ and $x_b > y_b$ for $b \in B \subseteq [n] \backslash \{j\}$. To clarify notation, let us denote $Y_j := y_j - x_j > 0$ and $X_b := x_b - y_b > 0$. We have
\begin{align*}
	\sum_{b \in B} X_b &= Y_j\\
	\sum_{b \in B} b X_b &= j Y_j\\
	\sum_{b \in B} b^2 X_b &= j^2 Y_j.
\end{align*} 
In particular, we have $\sum_b (b-1) X_b = (j-1) Y_j$ and $\sum_b (b-1)^2 X_b = (j-1)^2 Y_j$, hence we can shift the sequences until $j=1$ (and we have negative and positive indices). We obtain
$$
	\sum_b X_b = \sum_b bX_b = \sum_b b^2 X_b = Y_1.
$$
Then $\sum_b (b^2 - b) X_b = 0$, which implies that only $X_0$ must be nonzero ($X_1$ does not exist since $j=1$) and hence $\sum_b b X_b = 0 < Y_1$ which is the desired contradiction.

Now, for all $x \in [s]^n$ and $0 \le i \le 2$, $0 \le W_i(x) \le n^{i+1}(s-1)$, thus
\begin{align*}
	\max_m |\mathcal{C}_m| &\ge s^n \left\{ \big(n(s-1) + 1\big) \big( n^2(s-1) + 1 \big) \big( n^3 (s-1) + 1 \big) \right\}^{-1}\\
	\log_s g(K_n^+,s) &\ge n - 3 - o(1).
\end{align*}

\end{proof}

We provide two remarks on those results. Firstly, recall that for an unsigned digraph $D^0$, the limit of the guessing number is its supremum:
$$
	\lim_{s \to \infty} g(D^0,s) = \sup_{s \ge 2} g(D^0,s) = H(D^0),
$$
the so-called entropy of the digraph \cite{Rii06, Rii07}. However, $K_3^+$ is an example where this is completely reversed, for
$$
	g(K_3^+,s) > \lim_{s \to \infty} g(K_3^+,s) \qquad \forall \, s \ge 3.
$$
Therefore, the guessing number of signed digraphs can exhibit some behaviour which cannot be seen in unsigned digraphs. Interestingly, the negative clique $K_n^-$ does behave like an unsigned digraph since the limit of the guessing number is indeed its supremum.

Secondly, some results in the literature tend to suggest that non-negative cycles tend to produce many fixed points. This is reflected in our upper bound on the guessing number in Theorem \ref{th:g<A}, which directly depends on the non-negative girth. However, $D_1 = K_{n+1}^+$ and $D_2 = K_n^-$ are two signed digraphs such that $D_1$ has more non-negative cycles and more disjoint non-negative cycles than $D_2$, while $D_2$ has a higher guessing number. Therefore, the guessing number is not always an increasing function of the number of (disjoint) non-negative cycles. This goes against the common view mentioned above, and somehow echoes a result in \cite{ARS14} on the number of fixed points of conjunctive networks. These are Boolean networks where every local update function $f_i(x)$ is a conjunction of literals: a positive or negative sign on the arc $(j,i)$ indicates whether the literal is $x_j$ or $\neg x_j$. It is shown in \cite{ARS14} that the maximum number of fixed points of a conjunctive network without loops in its interaction graph is obtained by using a disjoint union of triangles, where all arcs are signed negatively. Therefore, maximising the number of fixed points in the conjunctive case goes against maximising the number of (disjoint) positive cycles.

We finish this section with a natural open question, given the gap in Proposition \ref{prop:Kn+2}.

\begin{question} \label{q:Kn+}
What is $\lim_{s \to \infty} g(K_n^+,s)$ for $n \ge 5$, if it exists?
\end{question}

\subsection{Convergence for positive or negative functions} \label{sec:convergence}

Our combinatorial approach based on guessing graphs and coding theory completely forgets about the actual networks with a given set of fixed points. Interestingly, sometimes a given set of fixed points $S \subseteq [s]^n$ admits a network $f \in F(D,s)$ which is easy to describe and hence to analyse. One main property we would like to study is whether the network  actually {\bf converges} to $S$, i.e. if for any $x$, there exists a positive integer $k$  such that $f^k(x) \in S$.

We first prove that convergence to a set of fixed points of maximal size can never occur for networks in $F(K_n^-,s)$.

\begin{proposition} \label{prop:Kn-CV}
For any $0 \le w \le n(s-1)$, $B(n,w,s)$ is the set of fixed points of the function in $F(K_n^-,s)$ defined by
$$
	f_i(x) = \mathrm{saturation} \left( w - \sum_{j \ne i} x_j \right),
$$
where saturation is a function from $\mathbb{Z}$ to $[s]$ defined as
$$
	\mathrm{saturation}(a) := \begin{cases}
	0 &\text{if } a < 0\\
	a &\text{if } 0 \le a \le s-1\\
	s-1 &\text{if } a > s-1.
	\end{cases}
$$

However, no function in $F(K_n^-,s)$ can converge to $B(n,w,s)$ for any $s-1 \le w \le (n-1)(s-1)$; in particular for $n \ge 2$ no such function converges to the largest set of fixed points $B(n, \lfloor n(s-1)/2 \rfloor, s)$.
\end{proposition}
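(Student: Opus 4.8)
The plan is to treat the two assertions of the proposition separately, since the first is a routine verification and the second carries the content.

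For the first assertion I would first confirm $f\in F(K_n^-,s)$: the map $x\mapsto w-\sum_{j\ne i}x_j$ is independent of $x_i$ and non-increasing in each $x_j$ with $j\ne i$, and post-composing with the non-decreasing function $\mathrm{saturation}$ preserves this, so every arc of $G(f)$ is an arc of $K_n^-$ with the correct sign. The inclusion $B(n,w,s)\subseteq\Fix(f)$ is immediate: if $W(x)=w$ then $w-\sum_{j\ne i}x_j=x_i\in[s]$, hence $f_i(x)=\mathrm{saturation}(x_i)=x_i$. For the reverse inclusion I would take $x\in\Fix(f)$ and set $\delta:=w-W(x)$, so $x_i=\mathrm{saturation}(x_i+\delta)$ for all $i$; if $\delta>0$ this forces $x_i+\delta>s-1$ for every $i$ (otherwise $x_i=x_i+\delta$, impossible), hence $x=(s-1,\dots,s-1)$ and $W(x)=n(s-1)<w$, a contradiction, and symmetrically $\delta<0$ forces $x=(0,\dots,0)$ and $W(x)=0>w$, again impossible; thus $\delta=0$, i.e. $x\in B(n,w,s)$.

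For the non-convergence assertion I read ``$g$ converges to $B(n,w,s)$'' as: $\Fix(g)=B(n,w,s)$ and every orbit of $g$ eventually meets $\Fix(g)$ (some such reading is clearly intended, since a constant map refutes the purely literal one; and if one merely assumes $B(n,w,s)\subseteq\Fix(g)$ together with every orbit meeting $B(n,w,s)$, then fixed points, having constant orbits, all lie in $B(n,w,s)$, so $\Fix(g)=B(n,w,s)$ anyway). The structural input is that every $g\in F(K_n^-,s)$ is order-reversing on the lattice $[s]^n$ (each $g_i$ being a non-increasing function of the coordinates other than $i$), so $g^2$ is order-preserving; hence, writing $\mathbf 0$ and $\mathbf m=(s-1,\dots,s-1)$ for the bottom and top, the sequence $g^{2k}(\mathbf 0)$ is non-decreasing and stabilises at $p:=\min\Fix(g^2)$, the sequence $g^{2k}(\mathbf m)$ is non-increasing and stabilises at $q:=\max\Fix(g^2)$, $p\le q$, and $g$ restricts to an order-reversing involution of $\Fix(g^2)$, which forces $g(p)=q$ and $g(q)=p$. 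Now, whenever $s-1\le w\le(n-1)(s-1)$ (which already forces $n\ge 2$, and which contains $w=\lfloor n(s-1)/2\rfloor$), the antichain $B(n,w,s)$ has at least two elements: a vector with a coordinate equal to $s-1$ exists because $w\ge s-1$, and one with a coordinate equal to $0$ exists because $w\le(n-1)(s-1)$. So $\Fix(g)=B(n,w,s)$ has size $\ge 2$; if $p=q$ then $\Fix(g^2)\subseteq[p,q]=\{p\}$ and $\Fix(g)$ would be a singleton, so $p\ne q$. Then $\{p,q\}$ is a genuine $2$-cycle, so $p,q\notin\Fix(g)=\{x:W(x)=w\}$, giving $W(p)\ne w\ne W(q)$; but every $y\in\Fix(g)\subseteq\Fix(g^2)$ satisfies $p\le y\le q$ and $W(y)=w$, so $W(p)\le w\le W(q)$, and therefore $W(p)<w<W(q)$. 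Finally I would examine the orbit of $\mathbf 0$: for even $k$, $g^k(\mathbf 0)\le p$ gives $W(g^k(\mathbf 0))\le W(p)<w$, and for odd $k$, $g^k(\mathbf 0)=g(g^{k-1}(\mathbf 0))\ge g(p)=q$ gives $W(g^k(\mathbf 0))\ge W(q)>w$; so no iterate of $\mathbf 0$ has weight $w$, i.e. its orbit never enters $B(n,w,s)$, contradicting convergence.

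The one delicate step — and the reason the range $s-1\le w\le(n-1)(s-1)$ is exactly the right one — is the passage to $p\ne q$: it is precisely the fact that $B(n,w,s)$ is a non-singleton antichain that forces the monotone map $g^2$ to have a non-degenerate fixed-point interval $[p,q]$, across which the order-reversing $g$ flips, and it is this flip that makes the orbit of $\mathbf 0$ oscillate between weights strictly below and strictly above $w$. Everything else is standard Knaster--Tarski bookkeeping on the finite lattice $[s]^n$, together with the elementary observation that $x\le y$ implies $W(x)\le W(y)$.
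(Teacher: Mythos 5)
Your proof is correct, and while the first half essentially coincides with the paper's (the paper simply asserts that the fixed points are exactly the weight-$w$ states; your $\delta$-analysis of the saturation supplies the routine detail, and your reading of ``converges to $B(n,w,s)$'' as requiring $B(n,w,s)\subseteq\Fix(g)$ is the intended one), the non-convergence half takes a genuinely different route. The paper argues directly at the two extreme states: each $g_i$ is non-increasing in every coordinate, so $g_i(x)\le g_i(0,\dots,0)$ for all $x$; since $w\ge s-1$ there is a fixed point with $i$-th coordinate $s-1$, forcing $g(0,\dots,0)=(s-1,\dots,s-1)$, and symmetrically $w\le(n-1)(s-1)$ forces $g(s-1,\dots,s-1)=(0,\dots,0)$, so these two states form a $2$-cycle avoiding $B(n,w,s)$. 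You instead run the order-theoretic argument: $g^2$ is monotone, the orbit of the all-zero state is trapped below $p=\min\Fix(g^2)$ at even times and above $q=g(p)=\max\Fix(g^2)$ at odd times, with $W(p)<w<W(q)$ once $p\ne q$. The paper's computation is shorter and exhibits the cycle explicitly, using the stated range directly; yours only needs $B(n,w,s)$ to be a non-singleton set of fixed points, so it actually proves non-convergence on the wider range $0<w<n(s-1)$ (with $n\ge 2$) --- which also means your closing claim that the stated range is ``exactly the right one'' for $p\ne q$ is inaccurate when $s\ge 3$. One cosmetic fix: to exhibit two distinct elements of $B(n,w,s)$, fix the same coordinate (one vector with $x_1=s-1$, one with $x_1=0$), since a single vector may contain both values.
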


\begin{proof}
For the function defined above, we have $x = f(x)$ if and only if $x_i = w - \sum_{j \ne i} x_j$ for all $i$, which is equivalent to $W(x) = w$. Now, suppose $f \in F(K_n^-,s)$ converges to $B(n,w,s)$, where $s-1 \le w \le (n-1)(s-1)$. Then consider $f(0,\dots,0)$: for any $x$ and any $i$, $f_i(x) \le f_i(0,\dots,0)$; however since there always exists a state $x \in B(n,w,s)$ such that $x_i = s-1$, we must have $f_i(0,\dots,0) = s-1$ for all $i$. Similarly, we have $f(s-1,\dots,s-1) = (0,\dots,0)$, thus these two states form an asymptotic cycle.
\end{proof}

For $K_n^+$, we are unable to describe a maximum set of fixed points. However, we have seen that in the  binary case, an optimal constant-weight code will be nearly optimal. We further remark that if $\mathcal{C}$ is a constant-weight code of minimum Hamming distance at least $4$ and weight $w \ge 2$, then $\mathcal{C}' := \mathcal{C} \cup \{(0,\dots,0), (1,\dots,1)\}$ is the set of fixed points of a positive function $f \in F(K_n^+,2)$, thus
$$
	g(K_n^+,2) \ge \log_2 \left( \AH(n,4,\left \lfloor \frac{n}{2} \right\rfloor, 2) + 2 \right).
$$ 
As such, we investigate convergence towards such a code. We prove that for most constant-weight codes of minimum distance $4$, there is a positive function which converges to $\mathcal{C}'$ in only three time steps. 

\begin{proposition} \label{prop:Kn+CV}
For any constant-weight code $\mathcal{C}$ of length $n$, weight $w$, and minimum Hamming distance $4$, such that $3 \le w \le n-3$ and $2w \ne n$, the following function $f \in F(K_n^+,2)$ converges towards $\mathcal{C}'$ in three steps.

For any vertex $i$ and any $a \in [s]$, we denote $x_{-i} := x_{V \setminus \{i\}}$ and we use the shorthand notation $(a,x_{-i}) := (x_0, \dots, x_{i-1}, a, x_{i+1}, \dots, x_{n-1})$; then
$$
	f_i(x_{-i}) = \begin{cases}
		1 &\text{if } (1,x_{-i}) \in \mathcal{C} \text{ or } W(x_{-i}) \ge w+1 \text{ or } (W(x_{-i}) = w \text{ and } (0,x_{i}) \notin \mathcal{C}),\\
		0 &\text{if } (0,x_{-i}) \in \mathcal{C} \text{ or } W(x_{-i}) \le w-2 \text{ or } (W(x_{-i}) = w-1 \text{ and } (1,x_{i}) \notin \mathcal{C}).
	\end{cases}
$$
\end{proposition}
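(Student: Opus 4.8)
The plan is to establish, in order, three facts: that $f$ is well defined and belongs to $F(K_n^+,2)$; that $\Fix(f) = \mathcal{C}'$; and that $f^3(x) \in \mathcal{C}'$ for every $x \in [2]^n$ (which, since $\mathcal{C}' = \Fix(f)$, is exactly convergence in at most three steps). The single organising idea is to classify each state --- or each $(n-1)$-tuple $x_{-i}$ feeding $f_i$ --- by its weight relative to $w$. First I would observe that $f_i(x_{-i})$ is determined by $m := W(x_{-i})$ together with whether the two lifts $(0,x_{-i})$ and $(1,x_{-i})$ (of weights $m$ and $m+1$) lie in $\mathcal{C}$, and that the two displayed conditions partition every case: $m \le w-2$ gives $f_i = 0$, $m \ge w+1$ gives $f_i = 1$, for $m = w$ one has $f_i = 0$ iff $(0,x_{-i}) \in \mathcal{C}$, and for $m = w-1$ one has $f_i = 1$ iff $(1,x_{-i}) \in \mathcal{C}$. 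Since $f_i$ does not involve $x_i$, it only remains to check monotonicity in each $x_j$, $j \ne i$; the one nontrivial case is when $(1,x_{-i}) \in \mathcal{C}$ (so $m = w-1$) and raising a coordinate produces $y_{-i}$ with $W(y_{-i}) = w$ and $(0,y_{-i}) \in \mathcal{C}$ --- but then $(1,x_{-i})$ and $(0,y_{-i})$ are codewords at Hamming distance $2$, contradicting minimum distance $4$. Hence $f \in F(K_n^+,2)$ by Lemma \ref{pro:monotonous}.

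For $\Fix(f) = \mathcal{C}'$: the inclusion $\supseteq$ is a direct substitution, since for $c \in \mathcal{C}$ the lift $(c_i,c_{-i})$ is $c$ itself, and the all-zero (resp. all-one) state has every $W(x_{-i})$ equal to $0 \le w-2$ (resp. $n-1 \ge w+1$), using $w\ge 2$ and $w\le n-2$. For $\subseteq$, take $x \in \Fix(f)$ that is not constant; a coordinate with $x_i = 1$ forces $W(x) \ge w$ and a coordinate with $x_i = 0$ forces $W(x) \le w$, so $W(x) = w$, and then $f_i(x_{-i}) = 1$ on a coordinate with $x_i = 1$ (where $W(x_{-i}) = w-1$) forces $(1,x_{-i}) = x \in \mathcal{C}$.

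For convergence I would branch on $W(x)$. If $|W(x) - w| \ge 2$ then $f(x)$ is already a constant state. If $W(x) = w$ then either $x \in \mathcal{C}$ (a fixed point), or the formulas collapse to $f(x) = \overline{x}$, whose weight $n-w$ differs from $w$ because $2w \ne n$, reducing to another case. If $W(x) = w+1$, then $f_i(x) = 1$ wherever $x_i = 0$, while on a coordinate with $x_i = 1$ one gets $f_i(x) = 0$ only if $(0,x_{-i}) \in \mathcal{C}$; minimum distance $4$ permits at most one such coordinate, so $f(x)$ is the all-one state or has weight $n-1 \ge w+2$ (using $w \le n-3$), whence one further step reaches the all-one state. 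The case $W(x) = w-1$ is the mirror image, using $w \ge 3$, reaching the all-zero state within two steps.

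Chaining these, the longest orbit before entering $\mathcal{C}'$ is $x \to \overline{x} \to (\text{weight } n-1 \text{ or } 1) \to (\text{constant})$, i.e. three steps. The main obstacle, and the place where $3 \le w \le n-3$ and $2w \ne n$ are all consumed, is precisely this weight-$w$ layer: one must notice that $f$ acts there by complementation on non-codewords, that $2w \ne n$ kills the induced $2$-cycle, and then bookkeep the step counts so that the complement never returns to weight $w$ and the subsequent near-constant vector is pushed all the way to a constant state inside the three-step budget.
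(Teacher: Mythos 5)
Your proposal is correct and follows essentially the same route as the paper: a case analysis stratified by $W(x)$ relative to $w$, using the minimum distance $4$ to ensure at most one coordinate can drop (or rise) at weight $w\pm1$, and using $2w \ne n$ to break the complementation at the weight-$w$ layer, giving $f^3(x) \in \mathcal{C}'$. The only difference is that you also verify explicitly that $f$ is well defined, monotone (hence in $F(K_n^+,2)$), and that $\Fix(f) = \mathcal{C}'$, points the paper treats as clear or as established in the remark preceding the proposition.
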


\begin{proof}
It is clear that $G(f) = K_n^+$. Let us prove that $f^3(x)\in\mathcal{C}'$ by cases on $x$.
\begin{enumerate}
	\item \label{it:case1} $W(x) \le w-2$. Then $W(x_{-i}) \le w-2$ for all $i$ and hence $f(x) = (0,\dots,0)$.
		
	\item \label{it:case2} $W(x) = w-1$ and $(1,x_{-i}) \in \mathcal{C}$ for some $i$. Firstly, remark that $i$ is unique: if $\dH((1,x_{-j}), (1,x_{-i})) = 2$, hence $(1,x_{-j}) \notin \mathcal{C}$. We then have $f_i(x) = 1$ and $f_k(x) = 0$ for any other coordinate $k$ (if $x_k = 0$, $W(x_{-k}) = w-1$ and $(1,x_{-k}) \notin \mathcal{C}$; if $x_k = 1$, $W(x_{-k}) = w-2$). Since $W(f(x)) = 1 \le w-2$, Case \ref{it:case1} yields $f^2(x) = (0,\dots,0)$.
	
	\item \label{it:case3} $W(x) = w-1$ and $(1,x_{-i}) \notin \mathcal{C}$ for all $i$. Then $f(x) = (0,\dots,0)$ (if $x_k = 0$, $W(x_{-k}) = w-1$ and $(1,x_{-k}) \notin \mathcal{C}$; if $x_k = 1$, $W(x_{-k}) = w-2$).
	
	\item \label{it:case4} $x \in \mathcal{C}$. Then $f(x) = x$.
	
	\item \label{it:case6} $W(x) = w+1$ and $(0,x_{-i}) \notin \mathcal{C}$ for all $i$. Then similarly to Case \ref{it:case3}, we obtain $f(x) = (1,\dots,1)$.

	\item \label{it:case7} $W(x) = w+1$ and $(0,x_{-i}) \in \mathcal{C}$ for some $i$. Then similarly to Case \ref{it:case2}, we obtain $f^2(x) = (1,\dots,1)$.
	
	\item \label{it:case8} $W(x) \ge w + 2$. Similarly to Case \ref{it:case1}, we obtain $f(x) = (1,\dots,1)$.
	
	\item \label{it:case5} $W(x) = w$ and $x \notin \mathcal{C}$. Then we obtain $f(x) = x + (1,\dots,1)$. If $w < n/2$, we have $W(f(x)) \ge w+1$ and hence we are in Case \ref{it:case6} to \ref{it:case8}; otherwise we have $W(f(x)) \le w -1$ and we are in Case \ref{it:case1} to \ref{it:case3}. Altogether, we obtain $f^3(x) \in \{(0,\dots,0), (1,\dots,1)\}$.
\end{enumerate}
\end{proof}

\section{Comparison between bounds} \label{sec:comparison_bounds}

The bounds we have determined so far are difficult to compare for they depend on different parameters of the digraph and on the alphabet size; moreover, some are only valid for certain classes of signed digraphs. For the sake of clarity, we shall only consider $s=2$ because our results would then be valid for Boolean networks and also because we have more relations between the different distances, and only positive digraphs $D^+$ since this is an important special case, and for which we obtain the tightest bounds.

In order to compare different bounds, we shall use their asymptotic behaviour. This is a technique commonly used in coding theory \cite{MS77}, where we investigate a sequence of binary codes $\mathcal{C}_n$ of length $n$, minimum distance $d_n$ (for the distance $\mu \in \{\dH, \dm, \dM\}$), and  such that $|\mathcal{C}_n| = A_\mu(n, d_n, 2)$. We consider the asymptotic notation
\begin{align*}
	\bar{d} &:= \lim_{n \to \infty} \frac{d_n}{n},\\
	\bar{A_\mu}(\bar{d}) &:= \limsup_{n \to \infty} \frac{\log_2 A_\mu(n, n\bar{d}, 2)}{n},
\end{align*}
and investigate how the asymptotic rate $\bar{A_\mu}$ behaves as a function of $\bar{d}$. For instance, the Gilbert bound in \eqref{eq:Gilbert}, together with the estimates on sums binomial coefficients in \eqref{eq:sum_binomial}, yield
$$
	\bar{\AH}(\bar{d}) \ge 1 - H(\bar{d}),
$$
which is the tightest asymptotic bound known so far. On the other hand, the Singleton bound yields $\bar{\AH}(\bar{d}) \le \bar{d}$, which is well below the asymptotic version of the sphere-packing bound in \eqref{eq:sphere-packing}:
$$
	\bar{\AH}(\bar{d}) \le 1 - H(\bar{d}/2).
$$
In fact, the sphere-packing bound is not the tightest asymptotic upper bound known so far. Instead, the celebrated McEliece-Rodemich-Rumsey-Welch (MRRW) bound yields \cite{MS77}
$$
	\bar{\AH}(\bar{d}) \le \mathrm{MRRW}(\bar{d}),
$$
where
\begin{align*}
	\mathrm{MRRW}(\bar{d}) &= \min_{0 < u \le 1 - 2 \bar{d}} \left\{ 1 + h(u^2) - h(u^2 + 2 \bar{d}u + 2 \bar{d}) \right\},\\
	h(x) &= H \left( \frac{1}{2} - \frac{1}{2} \sqrt{1-x} \right).
\end{align*}
In particular, $\mathrm{MRRW}(\bar{d}) = 0$ if $\bar{d} \ge 1/2$.


Our results on different distances in Lemma \ref{lem:AH_Anabla} then show that
$$
	\bar{\Am}(\bar{d}) = \bar{\AM}(\bar{d}) = \bar{\AH}(2\bar{d}).
$$

In order to study the asymptotic behaviour of the guessing number, we need to introduce some asymptotic notation for digraph parameters. Let $\{D^+_n\}$ be a sequence of positive digraphs on $n$ vertices with minimum vertex feedback set of size $k^+_n$, positive girth $\gamma^+_n$, and minimum in-degree $\delta^+_n$, where
\begin{align*}
	\lim_{n \to \infty} \frac{g(D^+_n,2)}{n} &= \bar{g},\\
	\lim_{n \to \infty} \frac{k^+_n}{n} &= \bar{k},\\
	\lim_{n \to \infty} \frac{\gamma^+_n}{n} &= \bar{\gamma},\\
	\lim_{n \to \infty} \frac{\delta^+_n}{n} &= \bar{\delta}.
\end{align*}
Using this notation, our bounds can readily be translated to asymptotic form.

\begin{proposition} \label{prop:asymptotic}
Asymptotically, we have two competing lower bounds
\begin{align*}
	\bar{g} &\ge \frac{1}{2} \bar{\delta},\\
	\bar{g} &\ge \bar{\AH}(2(1- \bar{\delta})) \ge 1 - H(2(1 - \bar{\delta})) \quad \text{for } \bar{\delta} \ge \frac{3}{4},
\end{align*}
and two competing upper bounds
\begin{align*}
	\bar{g} &\le \bar{\AH}(2\bar{\gamma}) \le \mathrm{MRRW}(2 \bar{\gamma}),\\
	\bar{g} &\le \bar{k}.
\end{align*}
\end{proposition}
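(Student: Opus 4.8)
The plan is to derive each of the four inequalities by taking the corresponding finite-length bound established earlier, dividing it by $n$, and letting $n\to\infty$, using the asymptotic coding-theoretic facts recalled just above the statement: the Gilbert bound $\bar{\AH}(\bar{d})\ge 1-H(\bar{d})$, the MRRW bound $\bar{\AH}(\bar{d})\le\mathrm{MRRW}(\bar{d})$, and the identities $\bar{\Am}(\bar{d})=\bar{\AM}(\bar{d})=\bar{\AH}(2\bar{d})$ that follow from Lemma~\ref{lem:AH_Anabla}.

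For the two lower bounds: when $\bar{\delta}=0$ both read $\bar{g}\ge 0$ and are trivial, so I would assume $\bar{\delta}>0$, in which case $\delta^+_n\ge\tfrac12\ln(4n)$ for all large $n$ and Theorem~\ref{th:bound_g2} applies; dividing its conclusion by $n$ and noting that $\tfrac1n\sqrt{\tfrac12\ln(4n)\,\delta^+_n}\le\sqrt{\ln(4n)/(2n)}\to 0$ (because $\delta^+_n\le n$) while $\tfrac1n\log_2 n\to 0$, one is left with $\bar{g}\ge\tfrac12\bar{\delta}$. For the second lower bound I would first evaluate the parameter $\phi$ of Theorem~\ref{th:g>A} in the positive case: there $d_i^0=d_i^-=0$ and $d_i^+=d_i$, so $\phi=\min\{\tfrac{n+1}{2},\,n-\delta^+_n+1\}$, which equals $n-\delta^+_n+1$ once $\bar{\delta}\ge\tfrac34>\tfrac12$, whence $\phi/n\to 1-\bar{\delta}\le\tfrac14$; Theorem~\ref{th:g>A} then gives $g(D^+_n,2)\ge\log_2\Am(n,\phi,2)$, and dividing by $n$, passing to the limit, and invoking $\bar{\Am}(\bar{d})=\bar{\AH}(2\bar{d})$ followed by the Gilbert bound \eqref{eq:Gilbert} gives $\bar{g}\ge\bar{\AH}(2(1-\bar{\delta}))\ge 1-H(2(1-\bar{\delta}))$.

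For the two upper bounds: Theorem~\ref{th:g<A} in the positive case gives $g(D^+_n,2)\le\log_2\AM(n,\gamma^+_n,2)$, so dividing by $n$, using $\gamma^+_n/n\to\bar{\gamma}$ and the identity $\bar{\AM}(\bar{d})=\bar{\AH}(2\bar{d})$, and then the MRRW bound, yields $\bar{g}\le\bar{\AH}(2\bar{\gamma})\le\mathrm{MRRW}(2\bar{\gamma})$; and $\bar{g}\le\bar{k}$ is immediate from $g(D,2)\le k^+$ in \eqref{eq:old_bounds} after dividing by $n$.

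The only step that is not purely formal is the interchange of the $\limsup$ defining $\bar{A_\mu}$ with the limit defining $\bar{g}$: to pass from an inequality relating $g(D^+_n,2)$ and $\log_2 A_\mu(n,d_n,2)$ with $d_n/n\to\bar{d}$ to the corresponding inequality between $\bar{g}$ and $\bar{A_\mu}(\bar{d})$, one sandwiches $d_n$ between $(\bar{d}\pm\epsilon)n$, uses that $A_\mu(n,\cdot,2)$ is non-increasing in its distance argument, and lets $\epsilon\to 0$ using the continuity of the asymptotic rate functions $\bar{\AH}$, $\bar{\Am}$, $\bar{\AM}$. I expect this bookkeeping to be the only mild obstacle; it is a standard manipulation in asymptotic coding theory and hides nothing deep.
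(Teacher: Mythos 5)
Your proposal is correct and follows essentially the same route as the paper: each asymptotic inequality is obtained by dividing the corresponding finite-length bound (Theorem~\ref{th:bound_g2}; Theorem~\ref{th:g>A} with $\phi = n-\delta^++1$ for positive digraphs followed by the Gilbert bound; Theorem~\ref{th:g<A} followed by MRRW; and the $k^+$ bound of \eqref{eq:old_bounds}) by $n$ and using the identities of Lemma~\ref{lem:AH_Anabla}. Your additional care about $\phi=\min\{(n+1)/2,\,n-\delta^+_n+1\}$ under $\bar{\delta}\ge 3/4$ and about the $\limsup$ bookkeeping simply makes explicit what the paper's proof leaves implicit.
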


\begin{proof}
The first lower bound is an immediate translation of Theorem \ref{th:bound_g2}. Also, for a positive digraph $D^+$, we have $\phi = n - \delta^+ + 1$, hence Theorem \ref{th:g>A} yields 
$$
	\bar{g} \ge \bar{\Am}(1- \bar{\delta}) = \bar{\AH}(2(1- \bar{\delta}));
$$
the asymptotic Gilbert bound gives the second lower bound $1 - H(2(1- \bar{\delta}))$. The remaining two lower bounds are asymptotically looser than $\bar{\delta}/2$: the $c^+$ lower bound in \eqref{eq:old_bounds} is upper bounded by $n / \gamma^+$, which asymptotically yields zero, while Theorem \ref{th:bound_g1} only yields $\bar{\delta} \log_2 (4/3)$.

For upper bounds, Theorem \ref{th:g<A} yields 
$$
	\bar{g} \le \bar{\AM}(\bar{\gamma}) = \bar{\AH}(2\bar{\gamma});
$$
the MRRW bound then yields the first upper bound. Finally, the feedback vertex set bound immediately yields the second upper bound $\bar{g} \le \bar{k}$.
\end{proof}

The lower bounds are easy to compare for they depend on the same parameter $\bar{\delta}$. They are displayed in Figure \ref{fig:lower}. In other to properly compare the upper bounds, we show in Figure \ref{fig:upper} the values of $\bar{\gamma}$ for which the MRRW bound is tighter than the feedback bound for a given value of $\bar{k}$. We first remark that since $\gamma^+ \le n - k^+ + 1$ for any signed digraph $D$, we have $\bar{\gamma} \le 1 - \bar{k}$; hence $\bar{\gamma}$ always lies below the top curve. For $\bar{\gamma}$ anywhere between the two curves, the bound from Theorem \ref{th:g<A} is tighter than the feedback vertex set bound; in particular, if $\bar{\gamma} \ge 1/4$, then Theorem \ref{th:g<A} yields $\bar{g} = 0$. We can then conclude that the $k^+$ bound is usually weaker than  the coding-theoretic bound in Theorem \ref{th:g<A}. This can be intuitively explained by the fact that $k^+$ corresponds to the Singleton bound for codes with minimum Hamming distance $n - k^+ +1$; however, as we mentioned earlier the Singleton bound tends to be poor for the binary case. Thus, unless $n - k^+ + 1$ is significantly higher than $\gamma^+$, the $k^+$ bound will be loose.

\begin{figure}
	\centering
	\includegraphics[scale=0.5]{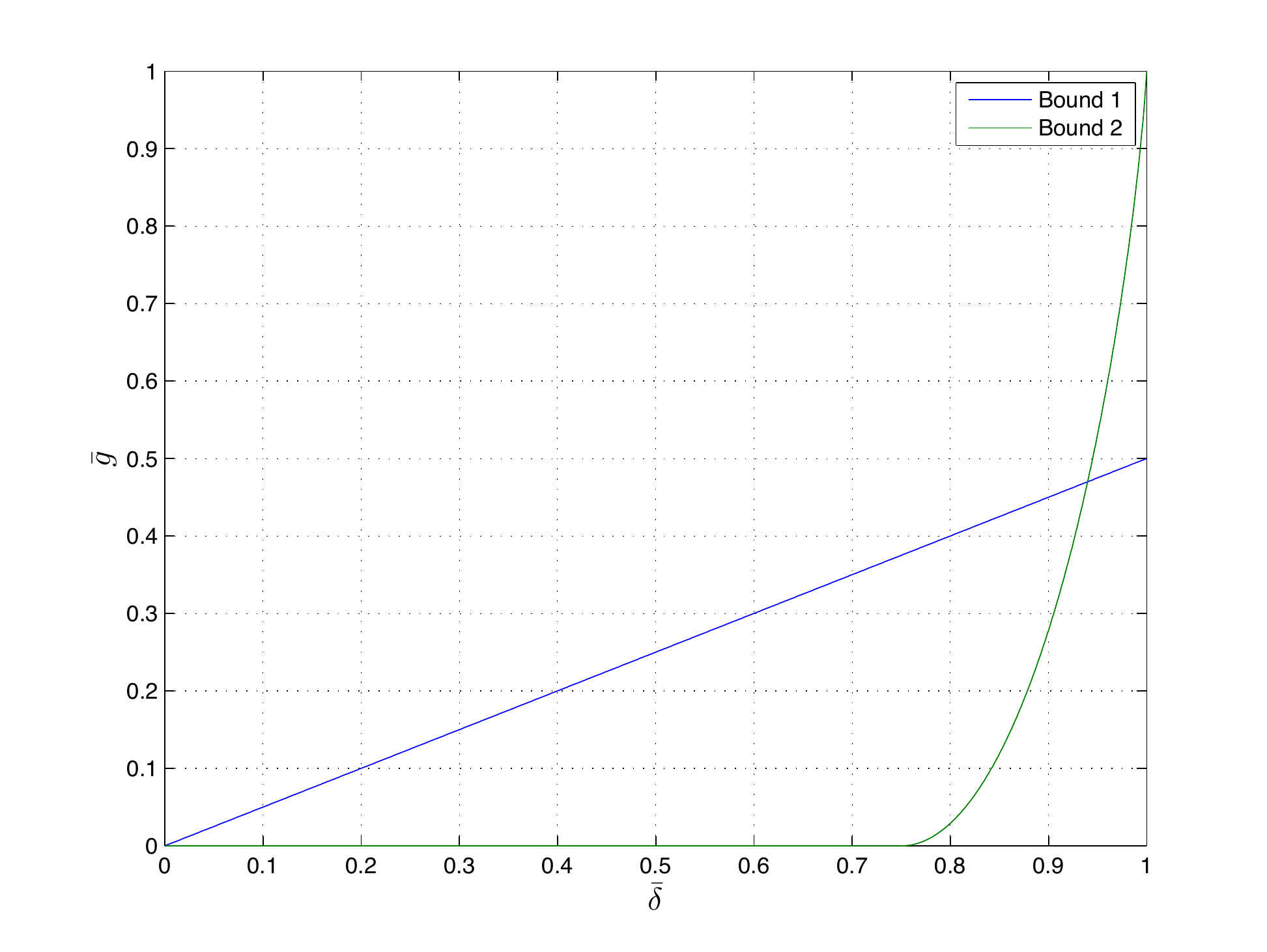}
	\caption{Asymptotic lower bounds on the guessing number as a function of the minimum in-degree.} \label{fig:lower}
\end{figure}

\begin{figure}
	\centering
	\includegraphics[scale=0.5]{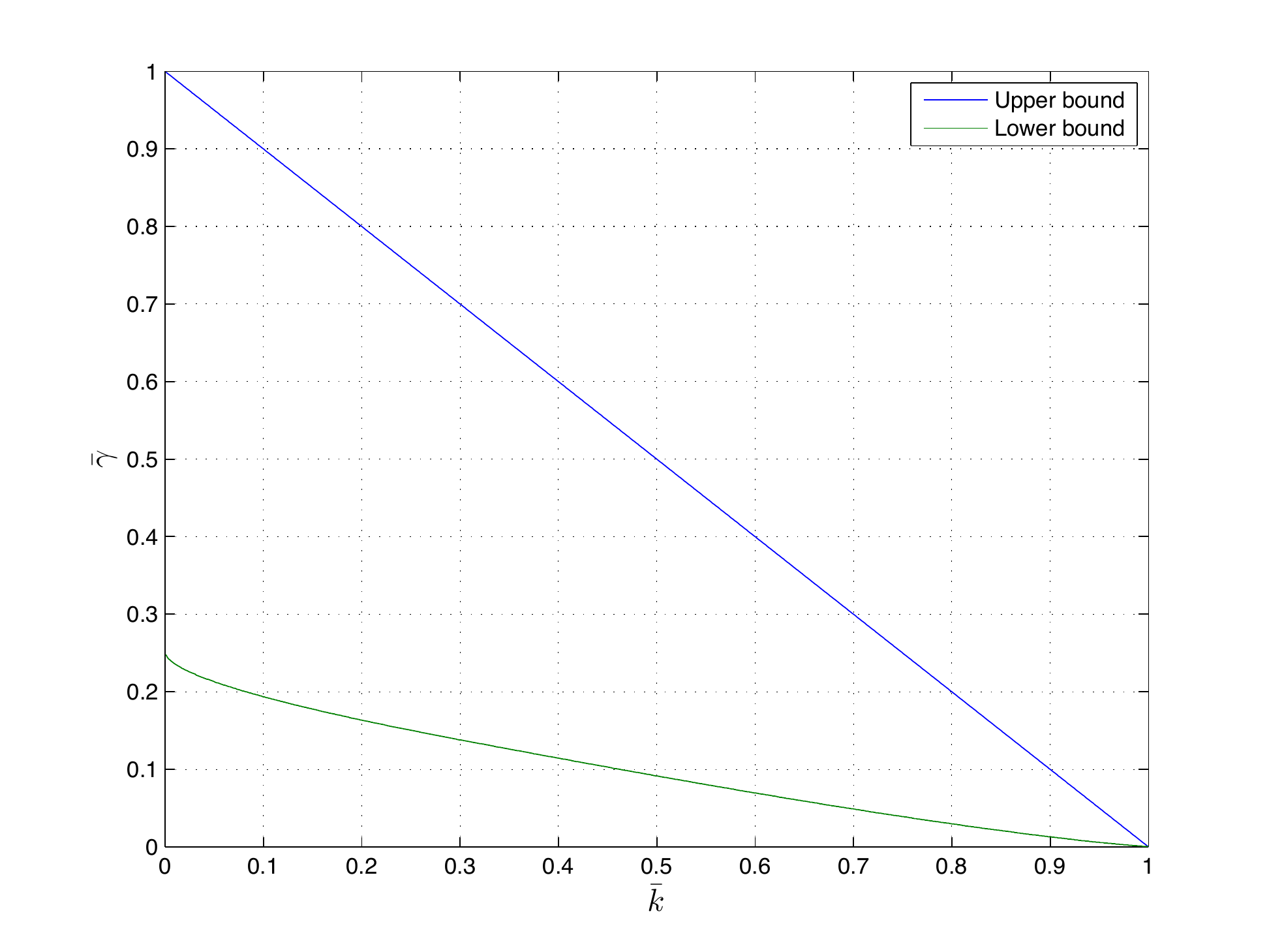}
	\caption{$\bar{\gamma}$ compared to $\bar{k}$: anywhere between the two curves shows an improvement over the $\bar{k}$ asymptotic upper bound.} \label{fig:upper}
\end{figure}


\providecommand{\bysame}{\leavevmode\hbox to3em{\hrulefill}\thinspace}
\providecommand{\MR}{\relax\ifhmode\unskip\space\fi MR }
\providecommand{\MRhref}[2]{%
  \href{http://www.ams.org/mathscinet-getitem?mr=#1}{#2}
}
\providecommand{\href}[2]{#2}

\end{document}